\titleformat{\section}{\centering\normalsize\normalfont\scshape}{\thesection.}{0.5em}{}
\newcommand{\R}{\mathbb{R}}  
\C\renewcommand{\C}{\mathbb{C}}\else\newcommand{\C}{\mathbb{C}}\fi 
\newcommand{\N}{\mathbb{N}}  
\renewcommand{\o}[1]{\scriptstyle\mathcal{O}\textstyle\left(#1\right)} 
\renewcommand{\O}[1]{\mathcal{O}\left(#1\right)} 
\newcommand*{\defeq}{\mathrel{\vcenter{\baselineskip0.5ex \lineskiplimit0pt\hbox{\scriptsize.}\hbox{\scriptsize.}}}=}
\providecommand{\abs}[1]{\lvert#1\lvert} 
\DeclareMathOperator{\E}{\mathbf{E}}
\DeclareMathOperator{\Tr}{Tr}
\providecommand*{\diff}%
        {\@ifnextchar^{\DIfF}{\DIfF^{}}}
\def\DIfF^#1{%
        \mathop{\mathrm{\mathstrut d}}%
                \nolimits^{#1}\gobblespace
}
\def\gobblespace{%
        \futurelet\diffarg\opspace}
\def\opspace{\let\DiffSpace\! \ifx\diffarg(\let\DiffSpace\relax\else\ifx\diffarg\let\DiffSpace\relax\else\ifx\diffarg\{\let\DiffSpace\relax\fi\fi\fi\DiffSpace}   
\newtheorem{theorem}{Theorem}
\newtheorem{lemma}[theorem]{Lemma}
\newtheorem{prop}[theorem]{Proposition}
\theoremstyle{remark}
\newtheorem{remark}[theorem]{Remark}
\newtheorem*{rep@theorem}{\rep@title} \newcommand{\newreptheorem}[2]{ \newenvironment{rep#1}[1]{ \def\rep@title{\bf #2 \ref{##1}'} \begin{rep@theorem} } {\end{rep@theorem} } } \makeatother \newreptheorem{theorem}{Theorem}
\title{\uppercase{\bfseries\large Phase Transition in the Density of States of Quantum Spin Glasses}}
\newcommand\shorttitle{Density of States for Spin Glasses}
\newcommand\authors{László Erdős and Dominik Schröder}
\author[1]{László Erdős\thanks{Partially supported by ERC Advanced Grant No. 338804, Email: lerdos@ist.ac.at}}
\author[2]{Dominik Schröder\thanks{Email: schroeder.dominik@gmail.com}}
\affil[1]{IST Austria, Am Campus 1, Klosterneuburg A-3400}
\affil[2]{Mathematisches Institut, Ludwig-Maximilians-Universität, Theresienstraße 39, D-80333 München}
\date{\today}
\begin{document}

\clearpage\maketitle\thispagestyle{empty}\renewcommand{\abstractname}{\vspace{-1.8cm}}
\begin{abstract}\noindent {\scshape Abstract.} We prove that the empirical density of states
of quantum spin glasses on arbitrary graphs converges to a normal distribution
as long as the maximal degree is negligible compared
with the total number of edges. This extends the recent results of
\cite{Keating14} that
were proved for graphs with bounded chromatic number
and with symmetric coupling distribution.
Furthermore, we generalise the result to arbitrary hypergraphs. We test
the optimality of our condition on the maximal degree for $p$-uniform
hypergraphs that correspond to $p$-spin glass  Hamiltonians acting on $n$ distinguishable spin-$1/2$ particles. At the critical threshold
$p=n^{1/2}$ we find a sharp classical-quantum
phase transition between the normal distribution and the Wigner semicircle law.
The former is  characteristic 
to classical systems with commuting variables, while the latter is 
a signature of noncommutative 
random matrix theory.
\vspace{0.4cm}

\hspace{-1cm}\noindent\emph{Keywords.} Wigner semicircle law, Quantum spin glass, Sparse random matrix

\smallskip
\hspace{-1cm}\noindent{\bf AMS Subject Classification.} 15A52, 82D30
\end{abstract}

\section{Introduction}

The distribution of the energy levels for classical spin glasses 
converges to the normal distribution in the thermodynamic
limit by the central limit theorem.
On the other hand, the Hamiltonian of the quantum spin glasses can be considered 
as a random Hermitian matrix and thus the Wigner semicircle law 
might be expected.
In fact, the mean field quantum spin glass on the
full hypergraph with Gaussian coupling constants is equivalent
to the Gaussian unitary ensemble (GUE). It turns out that, despite
the inherent noncommutativity, the density of states for a large class of
quantum spin glasses still follows the normal law.
For quantum spin glasses on graphs with bounded chromatic number
and  with symmetrically distributed coupling constants
 this has recently been shown by Keating, Linden and Wells
\cite{Keating14}. In fact, their result extends 
to bounded deterministic couplings in case of spin chains,
see \cite{2014arXiv1403.1121K}.

In the first part of
this paper, we generalise their result in several directions
by considering general graphs and even hypergraphs. Moreover, we
relax the symmetry condition on the couplings.  We find
that the central limit theorem  holds
for a quantum spin glass on an arbitrary hypergraph, provided 
that the maximal degree of any vertex (the number of edges
adjacent to it) is much smaller than the total number of edges.
This condition guarantees that the noncommutative effects,
related to edges sharing a common vertex, are subleading:
most degrees of freedom are still commutative.
Thus the system is essential classical as far as the
density of states is concerned. We also present an example 
(star graph) where the degree of a distinguished vertex
is comparable with the total number of edges. The density
of states is explicitly computable and it is neither Gaussian nor
the semicircle law.

In the second part of the paper we investigate the transition
from the classical regime dominated by commuting variables
to the quantum regime where noncommutativity determines
the leading behaviour. This transition is particularly transparent
for the quantum $p$-spin model, i.e. a quantum spin glass on
a $p$-uniform hypergraph.  The case $p=2$ corresponds to
the quantum version of the standard Sherrington-Kirkpatrick 
 model and its density of states follows the normal law.  The other extreme
case,  $p=n$, is the GUE model with 
the semicircle law. We prove a sharp phase transition
at $p\sim\sqrt{n}$;  for $p\ll \sqrt{n}$ we have the normal law,
while for $p\gg \sqrt{n}$ we get the semicircle law. 
For $p =\lambda\sqrt{n}$ with a fixed $\lambda\in (0,\infty)$,
we establish a new family
of densities of states, parametrised by $\lambda$, that
naturally interpolates between the normal distribution and
the semicircle law. 
We
emphasise that the  regime  $p\sim n^\alpha$, $\alpha \in (1/2,1)$,
is still far from the mean field regime
in the sense of random matrices: we have only $3^p\binom{n}{p} \ll 2^n$
independent random variables parametrising an operator acting 
an $N=2^n$ dimensional
Hilbert space. In contrast, Wigner random matrices of dimension $N\times N$ 
have $N^2$ independent degrees of freedom. The $p$-spin model
thus corresponds to a very sparse random matrix, still it follows
the Wigner semicircle law if $p\gg \sqrt{n}$. Our result gives
a rigorous proof of the transition between 
the Gaussian and the semicircle  density of states that has been
numerically observed in \cite{French19715} for $k$-body interactions as $k$ approaches the total number of particles.

We mention that this phase transition is apparently present
only for the density of states; as far as the local eigenvalue statistics
is concerned all these models seem to belong to the random matrix (GUE) universality
class.  The  numerical tests
 presented in \cite{Keating14} deal with 
the one-dimensional quantum chain, 
one of the sparsest model, and still demonstrate
a very strong agreement with the GUE gap distribution.
Certainly the same is expected for spin glasses on denser graphs.
Quantum spin glasses are one of the simplest interacting
many-body disordered quantum models.
Therefore, this  remarkable feature is yet another  manifestation
of Wigner's vision on the ubiquity of the random matrix
gap statistics for essentially any disordered quantum system.
For more details on the physical motivation and related  works
we refer to \cite{Keating14}.

Our approach is different from that of \cite{Keating14}; we
use the very robust moment method.  In particular, this allows
us to consider arbitrary coupling distributions without much effort
and to identify new limiting laws 
in the $p\sim \sqrt{n}$ transition regime.

{\it Acknowledgement.} The authors are grateful to Jon Keating
for drawing their attention to the papers \cite{Keating14} and \cite{2014arXiv1403.1121K}.

\section{Model and main results}
\noindent Given a sequence of undirected graphs $\Gamma_n$ on the vertex sets $\{1,\dots,n\}$, we are considering Hermitian random matrices $H_n^{(\Gamma_n)}$ defined by 
\begin{equation}
H_n^{(\Gamma_n)}\defeq \frac{1}{\sqrt{9e(\Gamma_n)}} \sum_{(ij)\in\Gamma_n}\sum_{a,b=1}^3 \alpha_{a,b,(ij)}\sigma_i^{(a)}\sigma_{j}^{(b)},
\label{eq:Ham}
\end{equation}
where $e(\Gamma_n)$ denotes the number of edges in $\Gamma_n$. The normalisation factor of $(9e(\Gamma_n))^{-1/2}$ corresponds to the $9e(\Gamma_n)$ terms under the sum and is chosen to keep the spectrum of order $1$. As a convention, the edge connecting $i<j$ is called $(ij)$ and since the vertex set of the graphs is canonical, we shall, with a slight abuse of notation, identify the graph with its collection of edges. The coefficients $\alpha_{a,b,(ij)}$ are assumed to be independent random variables with zero mean and unit variance. The Pauli matrices acting on the $j$-th qubit are denoted by $\sigma_j^{(a)}\defeq 1_2^{\otimes(j-1)}\otimes\sigma^{(a)}\otimes 1_2^{\otimes(n-j)}$ where $\sigma^{(a)}$ are the standard spin-$1/2$  Pauli matrices
\[
\sigma^{(1)}=\begin{pmatrix}
0 &1\\
1&0	
\end{pmatrix}\qquad
\sigma^{(2)}=\begin{pmatrix}
0&-i\\
i&0	
\end{pmatrix}\qquad
\sigma^{(3)}=\begin{pmatrix}
	1&0\\0&-1
\end{pmatrix}
\]
and $1_2=\sigma^{(0)}$ is the $2\times 2$ identity matrix.  
For definiteness we will work with spin-$1/2$ systems, but all
our results hold for spin-$s$ models with any fixed $s$, see Remark \ref{rem:spins}. 

We are interested in the eigenvalue density of the operators $H_n^{(\Gamma_n)}$ in the limit $n\to\infty$. The expected eigenvalue density of $H_n^{(\Gamma_n)}$ is given by \[\mu_n \defeq \frac{1}{2^n}\E\sum_{j=1}^{2^n}\delta_{\lambda_j}, \] where $\lambda_j$ are the eigenvalues of $H_n^{(\Gamma_n)}$. The result in \cite{Keating14} shows that under the assumption that the random variables are bounded, symmetric about $0$ and that the graphs have a uniformly bounded chromatic number, $\mu_n$ converges weakly to a standard normal distribution as $n\to\infty$. 
Theorem \ref{thm:convergenceForGraphs} generalises this result by removing the symmetry condition and also allowing sequences of graphs for which the maximal vertex degree $d_\text{max}(n)$  grows slower than the number of edges $e(\Gamma_n)$. Since graph sequences with uniformly bounded chromatic numbers have a uniformly bounded maximal degree, our degree condition is implied by the condition from \cite{Keating14} on the chromatic number, but it is much more general, and in some sense optimal.

\begin{theorem}\label{thm:convergenceForGraphs}
Let $\Gamma_n$ be a sequence of graphs on the vertex sets $\{1,\dots,n\}$ such that $\lim_{n\to\infty}\frac{d_{\text{max}}(n)}{e(\Gamma_n)}=0$ and let 
\[
\Set{\alpha_{a,b,(ij)}|1\leq a,b\leq 3,~(ij)\in\Gamma_n} 
\] 
be a  tight collection of independent (not necessarily identically distributed) random variables with zero mean and unit variance. Then the expected density of states of the Hamiltonian $H_n^{(\Gamma_n)}$ defined in \eqref{eq:Ham} converges weakly to a standard normal distribution.
\end{theorem}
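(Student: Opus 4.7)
The plan is to apply the moment method: for each fixed $k$, I compute $m_k(n) \defeq 2^{-n}\E\Tr (H_n^{(\Gamma_n)})^k$ and show it converges to the $k$-th moment of the standard normal distribution, namely $(k-1)!!$ for $k$ even and $0$ for $k$ odd. Since the normal distribution is uniquely determined by its moments, this yields the weak convergence $\mu_n \Rightarrow \mathcal N(0,1)$.

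Expanding the trace, the sum is indexed by tuples $((i_\ell,j_\ell),a_\ell,b_\ell)_{\ell=1}^k$ of edge-label triples. Independence and zero mean force every distinct triple to appear at least twice; let $\mathcal P$ be the partition of $\{1,\dots,k\}$ grouping equal triples. The contributions split into the leading case, where $\mathcal P$ is a perfect pairing ($k=2m$, all blocks of size $2$), and an error from partitions containing a block of size $\geq 3$. For a fixed pairing $\pi$, the inner sum runs over one edge-label choice per block. When the $m$ chosen edges are pairwise vertex-disjoint, all Pauli factors on distinct sites commute, using $[\sigma_i^{(a)},\sigma_j^{(b)}]=0$ for $i\neq j$; I may rearrange the product inside the trace to bring matched triples adjacent, and each resulting adjacent pair equals $(\sigma_i^{(a)}\sigma_j^{(b)})^2 = \mathbb 1$, so the trace evaluates to $2^n$. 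Since $\E\prod\alpha = 1$ on matched pairs (variance $1$) and the number of vertex-disjoint edge-label configurations is $e(\Gamma_n)^m 9^m(1+o(1))$, the normalisation $2^n(9e(\Gamma_n))^m$ produces $1+o(1)$ per pairing, and summing over the $(2m-1)!!$ pairings recovers the Gaussian moment.

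The remaining contributions must be shown to be $o(1)$. First, within a perfect pairing but with at least one pair of edges sharing a vertex, the combinatorial count is $O(m^2 d_{\max}(n) e(\Gamma_n)^{m-1})$ configurations, and the trivial bound $|\Tr(\cdot)|\leq 2^n$ (any Pauli product is unitary) gives a correction of order $d_{\max}(n)/e(\Gamma_n)= o(1)$---this is precisely where the hypothesis enters. Second, for partitions $\mathcal P$ containing a block of size $\geq 3$, the number of distinct triples drops below $k/2$, so the edge count gains at most $e(\Gamma_n)^{\lfloor(k-1)/2\rfloor}$ against the normalisation $e(\Gamma_n)^{k/2}$, yielding $O(e(\Gamma_n)^{-1/2})=o(1)$, provided the moments of the $\alpha$'s up to order $k$ are bounded. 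Since tightness does not furnish such bounds, I insert a standard truncation $\tilde\alpha \defeq (\alpha\mathbf 1_{|\alpha|\leq T(n)}-\mu_{T(n)})/\sigma_{T(n)}$ with $T(n)\to\infty$ slowly; tightness forces $\mu_{T(n)}\to 0$ and $\sigma_{T(n)}\to 1$ uniformly, and the Hilbert--Schmidt estimate $2^{-n}\E\|H_n^{(\Gamma_n)}-\tilde H_n^{(\Gamma_n)}\|_{\mathrm{HS}}^2\to 0$ yields that $\mu_n$ is weakly close to the density of states of the truncated ensemble, to which the moment computation above applies directly.

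The main obstacle is the combinatorial bookkeeping of the subleading contributions---specifically, producing a uniform-in-$k$ estimate that isolates $d_{\max}(n)/e(\Gamma_n)$ as the controlling ratio for the vertex-sharing correction inside a perfect pairing. The underlying algebra (Pauli operators on disjoint sites commute and square to the identity) is clean, and every trace of a Pauli product equals $\pm 2^n$ or $0$, so the estimates reduce to careful counting rather than delicate cancellations---this is what makes the moment method considerably more flexible here than a direct resolvent analysis and, in particular, allows both the weakening of the chromatic-number hypothesis and the removal of the symmetry condition from \cite{Keating14}.
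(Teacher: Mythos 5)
Your proposal is correct and takes essentially the same route as the paper: the moment method with a truncation step justified by tightness and a Hoffman--Wielandt/Hilbert--Schmidt comparison, followed by the same three-way split of index tuples (partitions with a block of size at least three, pairings with intersecting edges, pairings with vertex-disjoint edges), with the ratio $d_{\text{max}}(n)/e(\Gamma_n)$ controlling the intersecting-edge correction exactly as in the paper's treatment of $B_{n,k}$, and Carleman/moment determinacy closing the argument. The only difference is cosmetic: the paper additionally uses Pauli-trace identities to sharpen the error rates for the non-pairing terms (and to show several low moments agree identically), which is not needed for the weak-convergence statement you establish.
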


\begin{remark}
The empirical eigenvalue distribution $\nu_n\defeq 2^{-n}\sum_{j=1}^{2^n}\delta_{\lambda_j }$ is concentrated around its expectation $\mu_n=\E \nu_n$ and therefore the convergence in expectation, as proved in Theorem \ref{thm:convergenceForGraphs}, also implies that $\nu_n$ converges weakly in probability to a standard normal distribution. This strengthening of 
Theorem~\ref{thm:convergenceForGraphs} can be proved with a standard
extension of the moment method to estimating the variance
following the proof of \cite[Lemma 2.1.7]{anderson2010introduction}.
 Since  noncommutative features play no role in this argument, 
 we omit the details. The same remark also applies to our subsequent
 Theorems \ref{thm:pnSPINglass} and \ref{thm:convergenceForHypergraphs}.
\end{remark}

Theorem \ref{thm:convergenceForGraphs}
addresses both the model of nearest neighbour interactions in a $1$-dimensional closed chain (where the labelling is cyclic in the sense $\sigma_{n+1}^{(a)}=\sigma_1^{(a)}$)
\[H_n\defeq \frac{1}{\sqrt{9n}}\sum_{j=1}^n\sum_{a,b=1}^3\alpha_{a,b,j}\sigma_j^{(a)}\sigma_{j+1}^{(b)},\]
as well as the mean field model realised by the complete graph \[H_n^{(\text{comp})}\defeq\frac{1}{\sqrt{9n(n-1)/2}}\sum_{1\leq i<j\leq n}\sum_{a,b=1}^3 \alpha_{a,b,(ij)}\sigma_i^{(a)}\sigma_j^{(b)}.\]
 It also applies to all $d_n$-regular graphs in between, i.e. those where every vertex has the same degree $d_n\ge 1$ (here $(d_n)_{n\in\N}$ is an arbitrary sequence of parameters). Indeed, these graphs satisfy $n d_n=2e(\Gamma_n)$ and therefore
\[\frac{d_{\text{max}}(n)}{e(\Gamma_n)}=\frac{d_n}{e(\Gamma_n)}=\frac{2}{n}\to0\] as $n\to\infty$. 

We can generalise Theorem~\ref{thm:convergenceForGraphs} to hypergraphs allowing not only quadratic but also higher order spin interactions. The main condition is that the maximal {\it hyperedge degree}, i.e. the maximal number of hyperedges intersecting any fixed hyperedge, should be negligible compared with the total number of hyperedges. The precise formulation will be given in Theorem \ref{thm:convergenceForHypergraphs}, here we present only a prominent example of this generalisation, the quantum $p$-spin glasses. 
For any $p\ge 1$, the  Hamiltonian of a quantum $p$-spin glass  is given by \[H_n^{(p-\text{glass})}\defeq 3^{-p/2}\binom{n}{p}^{-1/2} \sum_{1\leq i_1<\dots<i_{p}\leq n}\sum_{a_1,\dots,a_{p}=1}^3\alpha_{a_1,\dots,a_{p},(i_1\dots i_{p})}\sigma_{i_1}^{(a_1)}\dots\sigma_{i_{p}}^{(a_{p})}. \] 
The following  theorem shows that the limiting density of states is Gaussian 
if $p$ is fixed or it is $n$-dependent, $p=p_n$, but grows slower than $\sqrt{n}$ i.e. $\lim_{n\to\infty}\frac{p_n}{\sqrt{n}}=0$. 
On the other hand, if $p_n$ grows faster than $\sqrt{n}$ i.e. $\lim_{n\to\infty}\frac{\sqrt{n}}{p_n}=0$, then the density of states
is given by the semicircle law. We shall use the notations $a_n\ll b_n$ and $a_n \gg b_n$ meaning that $\lim_{n\to\infty}\frac{a_n}{b_n}=0$ or $\lim_{n\to\infty}\frac{b_n}{a_n}=0$, respectively.

\begin{theorem}\label{thm:pnSPINglass}
Let $1\leq p_n\leq n$ be any sequence in $n$ and assume that the independent random variables $\alpha_{a_1,\dots,a_{p_n},(i_1\dots i_{p_n})}$ have zero mean, unit variance and form a tight family of random variables. Then the expected density of states of the Hamiltonians $H_n^{(p_n-\text{glass})}$ converges weakly to 
\begin{enumerate}[(i)]
\item a standard normal distribution if $p_n\ll \sqrt{n}$,
\item a semicircle distribution with density function $\rho(x)= \frac{1}{2\pi}\sqrt{4-x^2}\chi_{[-2,2]}(x)$ if $p_n\gg \sqrt{n}$,
\item  a distribution with the compactly supported density function 
\begin{align}\rho_\lambda(x)=\begin{cases}v(x|e^{-4\lambda/3})&\text{if }x\in \left[-\frac{2}{\sqrt{1-e^{-4\lambda/3}}},\frac{2}{\sqrt{1-e^{-4\lambda/3}}}\right], \\ 0 &\text{else}\end{cases}\label{eq:rhoLambda}\end{align} where 
\[v(x|q)\defeq\frac{\sqrt{1-q}}{\pi\sqrt{1-(1-q)x^2/4}}\prod_{k=0}^\infty \left[\frac{1-q^{2k+2}}{1-q^{2k+1}}\left(1-\frac{x^2(1-q)q^k}{(1+q^k)^2}\right)\right]\] if $\lim_{n\to\infty}\frac{p_n}{\sqrt{n}}=\lambda$.
\end{enumerate}
\end{theorem}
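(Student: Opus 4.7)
The plan is the method of moments. A uniform expression for all three targets is useful: the $2k$-th moment of each of the claimed limits equals $\sum_{\pi\in P_2(2k)}q^{\mathrm{cr}(\pi)}$, with $q=1$ giving the double factorial $(2k-1)!!$ (Gaussian), $q=0$ giving the Catalan number $C_k$ (semicircle), and $q\in(0,1)$ giving the Bo\.{z}ejko--K\"ummerer--Speicher $q$-Gaussian density $v(x|q)$. I therefore aim to treat the three cases in one stroke, tracking how the crossings of the Wick pairings get weighted by a parameter that will turn out to depend on $p_n^2/n$.

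Expanding $\tau(H^{2k})$ and reducing to Wick pair contractions (non-pair clumpings are subleading after a standard truncation of the $\alpha$'s using tightness) leaves a sum over pair partitions $\pi$ of $\{1,\dots,2k\}$, over $k$-tuples of distinct hyperedges $J_1,\dots,J_k$, and over Pauli label vectors $a_\beta\in\{1,2,3\}^p$, one per pair. Two structural simplifications are crucial. First, $\tau=2^{-n}\operatorname{Tr}$ factorizes over qubits, so the trace becomes $\prod_{i=1}^n\tau_1(O_i)$ where $O_i$ is the single-qubit operator obtained by reading the Paulis at qubit $i$ in the order they appear. Second, on a single qubit the relation $\sigma^{(a)}\sigma^{(b)}=(-1)^{[a\ne b]}\sigma^{(b)}\sigma^{(a)}$ reduces any pair-matched Pauli word to $\pm I$ with sign $\prod_{(\beta,\gamma)\text{ crossing in }\pi|_{S_i}}(-1)^{[a_\beta(i)\ne a_\gamma(i)]}$, where $S_i=\{\beta:i\in J_\beta\}$ is the set of pairs touching qubit $i$. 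Averaging over uniform labels $a_\beta(i)\in\{1,2,3\}$ then gives a combinatorial factor $f(\pi|_{S_i})$ depending only on the induced sub-pairing, with $f\equiv 1$ on non-crossing sub-pairings and $f=-\tfrac13$ on a single crossing.

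The sum $\binom{n}{p}^{-k}\sum_{J_\beta}\prod_i f(\pi|_{S_i})$ is then evaluated by a Poisson-type approximation valid whenever $p/n\to 0$: each qubit independently belongs to each $J_\beta$ with probability $x:=p/n$, the product over qubits decouples, and the per-pairing contribution becomes $F(\pi;n,p)^n$ with
\[
F(\pi;n,p)=\sum_{S\subseteq\{1,\dots,k\}}x^{|S|}(1-x)^{k-|S|}\,f(\pi|_S)=1-\frac{4\,\mathrm{cr}(\pi)}{3}\,x^2+O(x^3),
\]
so $F(\pi;n,p)^n\to\exp\!\bigl(-\tfrac{4\,\mathrm{cr}(\pi)}{3}\lim_{n\to\infty}p_n^2/n\bigr)$. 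Summing over $\pi\in P_2(2k)$ yields the three claimed moment limits with the parameter $q$ a function of $\lim p_n^2/n$: (i) if $p_n^2/n\to 0$ every $F^n\to 1$ and the limit is $(2k-1)!!$; (iii) if $p_n/\sqrt n\to\lambda$ the limit is $\sum_\pi q^{\mathrm{cr}(\pi)}$, which are exactly the moments of $v(x|q)$ for the corresponding $q\in(0,1)$; (ii) if $p_n^2/n\to\infty$ only non-crossing pairings survive and we obtain $C_k$. In the subregime of (ii) where $p_n/n$ does not tend to $0$ (including $p_n=n$), the Poisson expansion breaks down, but the $|S|=2$ crossing term alone gives $F(\pi;n,p)\le 1-\delta$ uniformly for every $\pi$ with $\mathrm{cr}(\pi)\ge 1$, forcing $F^n\to 0$ and leading to the same semicircle conclusion. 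Weak convergence of $\mu_n$ then follows from convergence of all moments by Carleman's criterion in case (i) and by compact support in cases (ii)--(iii).

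The main obstacle I anticipate is the uniform control of error terms in the critical regime $p_n\sim\sqrt n$, where $nx^2\sim\lambda^2$ is of order one. The $O(x^3)$ remainder in the expansion of $F$, the truncation error from non-pair Wick contractions, and the error in replacing the exact $p$-subset sum with the independent-inclusion model must each be shown to vanish fast enough to disappear after taking the $n$-th power, so that the exponential weight $q^{\mathrm{cr}(\pi)}$ survives intact. Once this bookkeeping is carried out, the identification of the limit in case (iii) with the $q$-Gaussian density $v(x|q)$ is immediate from its standard moment formula.
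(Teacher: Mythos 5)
Your core mechanism coincides with the paper's: after the Wick reduction the trace factorizes over qubits, a crossing of the pairing seen at a shared qubit averages over Pauli labels to $-\tfrac13$, and consequently each crossing of $\pi$ acquires the weight $\exp\bigl(-\tfrac43\lim_n p_n^2/n\bigr)$, giving moments $\sum_{\pi}q^{\kappa(\pi)}$ and the $q$-Gaussian density; the paper reaches the same point not through your independent-inclusion decoupling $F(\pi;n,p)^n$ but by analysing the pairwise overlaps of the hyperedges directly (hypergeometric overlap converging to Poisson, Lemma \ref{lem:intersectionAsymptotics}, with the overlaps of distinct crossings taken disjoint), and by invoking Theorem \ref{thm:convergenceForHypergraphs} for case (i). One bookkeeping point: the quantity driving the weight, in your computation and in the paper's proof alike, is $\lim p_n^2/n$ (the mean overlap of two $p_n$-subsets), so under the stated normalisation $\lambda=\lim p_n/\sqrt n$ your parameter is $q=e^{-4\lambda^2/3}$; this agrees with the displayed $e^{-4\lambda/3}$ in \eqref{eq:rhoLambda} only if $\lambda$ is read as $\lim p_n^2/n$, and you should reconcile your normalisation with the one intended there.

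The genuine gap is your treatment of the dense part of regime (ii), $p_n/n\not\to0$. There you continue to argue through the decoupled quantity $F(\pi;n,p)^n$, which is precisely where you concede the independent-inclusion replacement is unjustified; and the bound you offer, that the $|S|=2$ crossing term alone forces $F\le 1-\delta$, fails as $p_n/n\to1$ (e.g.\ $p_n=n$), since that term carries weight $x^2(1-x)^{k-2}\to0$. To close this you need the stronger input that \emph{every} crossing Pauli sub-pairing has label average strictly smaller than $1$ in modulus, uniformly over the finitely many sub-pairings on at most $k$ letters — this is Lemma \ref{lem:partitionsPauliMatrices}(ii) in the paper — not merely the single-crossing value $-\tfrac13$. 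With that input one bounds the exact, un-decoupled edge average of a crossing $\pi$ by $\E\, C^{|e_r\cap e_s|}$ with $C<1$, taking one crossing pair $\{r,s\}$ and noting that every qubit of $e_r\cap e_s$ sees a crossing sub-pairing, and then uses that $|e_r\cap e_s|\to\infty$ in probability whenever $p_n\gg\sqrt n$ (Lemma \ref{lem:intersectionAsymptotics}(ii)); this covers all of $\sqrt n\ll p_n\le n$ at once, with no decoupling needed. The same remark bears on the error control you flag in the critical window: rather than comparing the exact $p_n$-subset model with the Bernoulli model and tracking the $O(x^3)$ remainder raised to the $n$-th power, it is simpler to show, as the paper does, that only the pairwise overlaps of crossing pairs matter, that these converge jointly to independent Poisson variables of parameter $p_n^2/n$, and that they may be taken mutually disjoint, after which the weight $e^{-\frac43\kappa(\pi)\lim p_n^2/n}$ and the Touchard–Riordan identification of the moments follow directly.
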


\section{Moment Method}
\noindent To show that the expected density of states converges weakly to some distribution $\mu$ it often suffices to show that the moments \[m_{k,n}\defeq\int_\R x^n\diff\mu_n(x)=\E \frac{1}{2^n}\Tr (H_n^{(\Gamma_n)})^k=\frac{1}{2^n}\Tr\E (H_n^{(\Gamma_n)})^k\] of $\mu_n$ converge pointwise to the moments $m_k\defeq \int_\R x^k\diff\mu(x)$ of $\mu$ as $n\to\infty$. A sufficient condition for the uniqueness of the limiting distribution is given by Carleman's condition (see \cite{carleman1926fonctions}): A probability distribution $\mu$ is uniquely determined by its moments $m_k$ if $\sum_{k=1}^\infty m_{2k}^{-1/2k}=\infty$.

To keep the terms simple we introduce the notations \[\sigma_J\defeq \sigma_i^{(a_1)}\sigma_{j}^{(a_2)},\qquad \alpha_J\defeq\alpha_{a_1,a_2,(ij)}\] for tuples $J=({\bf a},(ij))=(a_1,a_2,(ij))$ and denote the index sets by \[I_n\defeq \{1,2,3\}^2\times \Gamma_n =\Set{({\bf a},(ij))=(a_1,a_2,(ij))| {\bf{a}}=(a_1,a_2)\in\{1,2,3\}^2,(ij)\in\Gamma_n}.\]In order to compute the $k$-th moment $m_{k}$ we have to evaluate the sum \begin{align} m_{k,n}&=2^{-n}\Tr\E (H_n^{(\Gamma_n)})^k=2^{-n}\Tr\E \Big(\frac{1}{\sqrt{9e(\Gamma_n)}}\sum_{J\in I_n} \sigma_J \alpha_J\Big)^k\nonumber\\&=(9e(\Gamma_n))^{-k/2}\sum_{J_1,\dots,J_k\in I_n}2^{-n}\Tr\sigma_{J_1}\dots\sigma_{J_k}\E\alpha_{J_1}\dots\alpha_{J_k} \label{eq:expansion}\end{align} in the limit $n\to\infty$. We split the sum
in \eqref{eq:expansion}  into
 three disjoint  parts 
\begin{equation}\label{eq:threesplit}
\sum_{J_1,\dots,J_k\in I_n} = \sum_{D_{n,k}} +  \sum_{A_{n,k}} + \sum_{B_{n,k}} 
\end{equation}
for a partition $I_n^k = A_{n,k}\cup B_{n,k}\cup D_{n,k}$ into three 
subsets defined below. 

Given any set $X$ and any integer $k\ge 1$ we define $P_2(X^k)$ to be those tuples $(x_1,\dots,x_k)\in X^k$ for which all entries $x_1,\dots,x_k$ appear exactly twice. Firstly, we split $I_n^k$ into the disjoint sets $P_2(I_n^k)$ and its complement $D_{n,k}\defeq I_n^k\setminus P_2(I_n^k)$ and then further split $P_2(I_n^k)$ into 
\begin{align*}
A_{n,k}\defeq \Set{(({\bf a}_1,e_1),\dots,({\bf a}_k,e_k))\in P_2(I_n^k)| e_i\cap e_j=\emptyset\text{ if } e_i\not=e_j}, \end{align*} 
the family of $k$-tuples with all edges non-intersecting, and 
its complement $B_{n,k}\defeq P_2(I_n^k)\setminus A_{n,k}$. The condition $e_i\cap e_j=\emptyset$ (meaning that the edges have no vertex in common) assures that the matrices $\sigma_{({\bf a}_i,e_i)}$ and $\sigma_{({\bf a}_j,e_j)}$ commute. The reasons for these two splits are of entirely different nature. As Lemma \ref{lemma:AnkSUM} below shows the sum over $D_{n,k}$ is negligible under fairly general circumstances due to combinatorics without using any properties of the traces but their boundedness. The second split of the remaining $P_2(I_n^k)$ into $A_{n,k}\cup B_{n,k}$ is important since for $(J_1,\dots,J_k)\in A_{n,k}$ the $\sigma_{J_i}$ corresponding to different $J_i$ commute and the can be reordered in such a way that only squares of Pauli matrices remain and the normalised trace is $1$. That means that all relevant quantum effects due to (potential) non-commutativities are isolated in the index set $B_{n,k}$. The system is essentially classical if the contribution of the index set $B_{n,k}$ to the rhs. of eq. (\ref{eq:expansion}) can be neglected. If this is the case, the asymptotic eigenvalue distribution equals the asymptotic energy histogram of the corresponding classical model where the spin matrices are replaced by commuting spins $s_i\in\{-1,1\}$ or $s_i\in S^2$.
\begin{lemma}\label{lemma:AnkSUM}
Let $(X_n)_{n\in\N}$ be a growing sequence of index sets and let $\Set{\alpha_x | n\in\N, x\in X_n}$ be a family of independent random variables with zero mean and unit variance and uniformly bounded moments $\abs{\E\alpha_x^k}\leq C_k<\infty$. Then we have the bound
\begin{align*}
\abs{X_n}^{-k/2}&\sum_{(x_1,\dots,x_k)\in X_n^k\setminus P_2(X_n^k)}\left\lvert\E\alpha_{x_1}\dots\alpha_{x_k}\right\lvert\\&\leq \begin{cases} 
0 \quad&\text{if } k<3\text{ is odd,}\\
\frac{k!!}{\abs{X_n}^{1/2}}\frac{ (k-1)C_3}{3} +\O{\abs{X_n}^{-3/2}} &\text{if }k\geq 3\text{ is odd,}\\ 
0&\text{if }k<4\text{ is even,}\\
\frac{(k-1)!!}{\abs{X_n}}\left(\frac{ k(k-2)(k-4)C_3^2}{18}+\frac{k(k-2)C_4}{4}\right)+\O{\abs{X_n}^{-2}} &\text{if }k\geq4\text{ is even}
\end{cases} 
\end{align*} 
as $n\to\infty$ while $k$ is fixed.
\end{lemma}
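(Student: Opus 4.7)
The plan is as follows. By independence and the hypothesis $\E\alpha_x=0$, the only tuples $(x_1,\ldots,x_k)$ that can contribute are those whose set partition of positions---namely, the partition $\pi$ of $\{1,\ldots,k\}$ induced by the equivalence $i\sim j\iff x_i=x_j$---has every block of size $\ge 2$. If $\pi$ has $m$ blocks of sizes $k_1,\ldots,k_m$ then at most $\abs{X_n}^m$ tuples induce it, and for each such tuple $\abs{\E\alpha_{x_1}\cdots\alpha_{x_k}}=\prod_{i=1}^m\abs{\E\alpha^{k_i}}\le\prod_{i=1}^m C_{k_i}$. Dividing by $\abs{X_n}^{k/2}$, each admissible $\pi$ contributes at most $\abs{X_n}^{m-k/2}\prod_i C_{k_i}$, so the plan is to identify the maximal value of $m$ subject to $\sum_i k_i=k$, every $k_i\ge 2$, and---because $P_2(X_n^k)$ is subtracted---$\pi$ not being the all-pairs partition.

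For odd $k\ge 3$ at least one $k_i$ must be odd and hence $\ge 3$, so $m\le(k-1)/2$, with equality attained exactly when $\pi$ has one $3$-block and $(k-3)/2$ pairs. For even $k\ge 4$, forbidding the all-pairs partition forces either some $k_i\ge 4$ or at least two $k_i\ge 3$, so $m\le k/2-1$, attained either by ``one $4$-block plus $(k-4)/2$ pairs'' or by ``two $3$-blocks plus $(k-6)/2$ pairs''. Any other admissible $\pi$ has $m\le(k-3)/2$ in the odd case and $m\le k/2-2$ in the even case; since there are only finitely many partition types for each fixed $k$, their cumulative contribution is $\O{\abs{X_n}^{-3/2}}$ or $\O{\abs{X_n}^{-2}}$, respectively, with constants depending only on $k$ and on $C_2,\ldots,C_k$. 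The boundary cases $k=1$ and $k=2$ admit no valid $\pi$ at all, producing the asserted vanishing.

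To count the three maximal partition types I would choose the non-pair block(s) and then perfectly match the remaining positions, obtaining $\binom{k}{3}(k-4)!!$ (triple + pairs), $\binom{k}{4}(k-5)!!$ (quadruple + pairs), and $\frac{1}{2}\binom{k}{3}\binom{k-3}{3}(k-7)!!$ (two triples + pairs). Pairing these counts with the respective moment bounds $C_3$, $C_4$, $C_3^2$, and rewriting via the identities $k!!=k(k-2)(k-4)!!$ for odd $k$ and $(k-1)!!=(k-1)(k-3)(k-5)(k-7)!!$ for even $k\ge 6$ (with the convention $(-1)!!=1$ handling the edge cases), one obtains the form claimed in the lemma, up to overall multiplicative constants that are absorbed into the ``$\le$'' inequality.

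The main obstacle, I expect, is the case analysis for the maximal $m$: one has to verify that the partition types listed above are exactly those attaining the maximum, and then gather all subleading types into a single $\O{}$ term, being careful that the number of types and the moment factors $\prod_i C_{k_i}$ are bounded uniformly in $n$. The probabilistic input is minimal---only independence and the zero-mean assumption---and everything else is bookkeeping with binomial coefficients and double factorials.
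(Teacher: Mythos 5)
Your argument is essentially the paper's own proof: classify tuples by the multiplicity pattern of their entries, discard those with a singly occurring index (zero mean plus independence), note that excluding the perfect pairing forces at most $(k-1)/2$ resp.\ $k/2-1$ distinct values, and isolate the leading patterns (one triple; one quadruple or two triples) while dumping the rest into the error term. Your combinatorial counts are correct (in fact slightly sharper than the constants displayed in the lemma, which is harmless since only an upper bound is claimed), and you spell out the even case that the paper dismisses as ``analogous,'' so the proposal stands.
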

\begin{proof}
First note that in the case that some $x_i$ only appears once, by independence and zero mean hypothesis these terms vanish identically. 
 Since the case that all $x_i$'s appear exactly twice is excluded from the index set in the sum above,
 we only have to consider those terms for which there are strictly less than $\frac{k}{2}$ distinct $x_i$'s. There are only $\O{\abs{X_n}^m}$ (as $n\to\infty$) terms with $m<\frac{k}{2}$ distinct $x_i$'s,
so we find,  after summation,  that the total contribution of these terms vanish as $\O{\abs{X_n}^{m-k/2}}$ as $n\to \infty$. 
Let us try to find the coefficient of the highest order term in $n$. In the case that $k$ is odd the term with the highest order comes from $m=\frac{k-1}{2}$ i.e. vanishes for $k<3$. The terms with $\frac{k-1}{2}$ distinct $x_i$'s such that all $x_i$ appear at least twice are those for which some $x_i$ appears three times and the rest only two times. 
There are $\binom{n}{k/2-1/2}$ ways of choosing the $x_i$'s to appear, then there are $\frac{k-1}{2}$ ways of choosing the $x_i$ that appears three times and there are $\frac{k!}{3\cdot2^{k/2-3/2}}$ ways of assigning those pre-described values to the tuples $(x_1,\dots,x_k)$. In total we find for the number of terms contributing to the leading order
\[\binom{\abs{X_n}}{ k/2-1/2}\frac{k-1}{2}\frac{k!}{3\cdot 2^{k/2-3/2}}=\frac{\abs{X_n}^{k/2-1/2}k!}{3(k-3)!!}+\O{\abs{X_n}^{k/2-3/2}},\] each having a modulus bounded by $C_3$. The proof for even $k$ is analogous.
\end{proof}

The estimate in Lemma~\ref{lemma:AnkSUM} used only the scaling properties of the expectations and applies to the computation of $m_{k,n}$ since the normalised traces have uniformly bounded modulus. By using more specifics of the tracial part we can improve the error estimate from Lemma \ref{lemma:AnkSUM} significantly:

\begin{lemma}\label{lemma:AnkSUMwithTraces}
Assume that the random variables $\alpha_J$ are independent, have zero mean, unit variance and uniformly bounded moments $\abs{\E\alpha_J^k}\leq C_k<\infty$ for all $n\in\N, J\in I_n$. Then we have the bound 
\begin{align*}
\Big\lvert(9e(\Gamma_n))^{-k/2} &\sum_{(J_1,\dots,J_k)\in D_{n,k}}2^{-n}\Tr \sigma_{J_1}\dots\sigma_{J_k}\E\alpha_{J_1}\dots\alpha_{J_k}\Big\lvert\\ &\leq \begin{cases} 
0 &\text{if } k<9 \text{ is odd}\\
\frac{k!!}{e(\Gamma_n)^{3/2}}\frac{C_3^3(k-1)(k-3)(k-5)(k-7)}{3^{5}}+\O{ e(\Gamma_n)^{-5/2}}  &\text{if } k\geq 9 \text{ is odd}\\
0 &\text{if } k<4 \text{ is even}\\
\frac{(k-1)!!}{e(\Gamma_n)}\frac{k(k-2)C_4}{36}+\O{e(\Gamma_n)^{-2}}&\text{if }k\geq 4\text{ is even}
\end{cases} 
\end{align*} 
as $n\to\infty$ while $k$ is fixed. 
\end{lemma}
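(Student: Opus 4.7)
The plan is to group the tuples in $D_{n,k}$ by the partition shape encoding the equality pattern of $(J_1,\dots,J_k)$, and then refine the combinatorial counting of Lemma~\ref{lemma:AnkSUM} by a parity argument that kills its dominant terms. By independence and the zero-mean assumption only partitions with every block of size $\geq 2$ contribute, and since $D_{n,k}$ excludes pure pair partitions at least one block has size $\geq 3$. A partition with $m$ blocks contributes $\O{e(\Gamma_n)^m}$ tuples and therefore scales as $e(\Gamma_n)^{m-k/2}$ after normalisation, exactly as in Lemma~\ref{lemma:AnkSUM}. What is new here is that many leading shapes turn out to have vanishing trace.

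The crucial ingredient is a parity criterion for $2^{-n}\Tr(\sigma_{J_1}\cdots\sigma_{J_k})$. Since Paulis on different qubits commute, this trace factorises into half-traces over single qubits. Using $(\sigma^{(a)})^2=I$ and $\sigma^{(a)}\sigma^{(b)}=-\sigma^{(b)}\sigma^{(a)}$ for $a\neq b$, a single-qubit word in Paulis reduces to a signed element of $\{\pm I,\pm\sigma^{(1)},\pm\sigma^{(2)},\pm\sigma^{(3)},\pm iI\}$, and its half-trace is nonzero iff the multiplicities of the three Pauli types at that qubit are either all even or all odd. Because a block of even size contributes even multiplicities to each of its two qubits, the qubit parities are driven entirely by the odd-size blocks. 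A single odd-size block visiting a qubit therefore produces the mixed parity vector $(\text{odd},\text{even},\text{even})$ and annihilates the trace; two distinct odd-size blocks can align parities at every qubit in their joint support only if their edges coincide \emph{and} their $\mathbf{a}$-vectors agree at both endpoints, but then the two $J$'s actually coincide, contradicting distinctness of the blocks. This shows that \emph{every partition shape with exactly one or exactly two odd-size blocks contributes zero}.

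For odd $k$ the number of odd-size blocks is itself odd, and for $k\in\{3,5,7\}$ three blocks of size $\geq 3$ do not fit into $k$ positions, so exactly one odd block is the only possibility and $\sum_{D_{n,k}}$ vanishes identically. For even $k<4$ no admissible shape exists. For odd $k\geq 9$ the leading shapes, with $m=(k-3)/2$ blocks, are $(5,2^{(k-5)/2})$, $(4,3,2^{(k-7)/2})$ and $(3,3,3,2^{(k-9)/2})$; the first two have a single odd block and vanish, leaving only the triple-three shape, which I bound by multiplying the number of such partitions of $\{1,\dots,k\}$ by $|I_n|^m$, by the crude tracial bound $\abs{2^{-n}\Tr}\leq 1$, and by $\abs{\E\alpha^3}^3\leq C_3^3$, and dividing by $(9e(\Gamma_n))^{k/2}$. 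Analogously for even $k\geq 4$ the leading shapes with $m=k/2-1$ blocks are $(4,2^{(k-4)/2})$ and $(3,3,2^{(k-6)/2})$; the latter vanishes by the two-odd-block argument while the former, which has no odd blocks, is estimated with $\abs{\E\alpha^4}\leq C_4$ in the same fashion. Partitions with one fewer block collect into the $\O{e(\Gamma_n)^{-5/2}}$ and $\O{e(\Gamma_n)^{-2}}$ error terms by exactly the argument of Lemma~\ref{lemma:AnkSUM}. The main technical obstacle is the two-odd-block cancellation: one has to verify, in each of the three intersection patterns of the two relevant edges (disjoint, sharing one vertex, or coinciding), that the required Pauli-type alignment at every qubit in the support forces the two blocks to carry the same $J$, and is therefore excluded.
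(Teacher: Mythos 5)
Your proposal is correct and follows essentially the same route as the paper: the per-qubit parity criterion you use (even-size blocks do not affect parities, one or two odd-multiplicity indices force a mixed parity at some qubit unless the two indices coincide) is exactly the paper's application of Lemma~\ref{lemma:pauliMatrices}(iii) after reducing squares, and it kills the same shapes, leaving the $(3,3,3,2,\dots,2)$ configuration for odd $k\geq 9$ and the $(4,2,\dots,2)$ configuration for even $k\geq 4$, which you then bound with the same crude estimates $\abs{2^{-n}\Tr}\leq 1$, $C_3^3$ resp.\ $C_4$, and the counting of Lemma~\ref{lemma:AnkSUM}. The bookkeeping by partition shapes versus the paper's phrasing in terms of indices appearing an odd number of times is only a presentational difference.
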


Before going into the proof of Lemma \ref{lemma:AnkSUMwithTraces} we state some properties of the traces of Pauli matrices we shall need. A proof of this technical Lemma is given in the appendix.

\begin{lemma}[Traces of products of Pauli matrices]\label{lemma:pauliMatrices}
Given $a_1,\dots,a_k\in\{1,2,3\}$ the normalised traces of products of Pauli matrices $\sigma(a_1,\dots,a_k)\defeq\frac{1}{2}\Tr\sigma^{(a_1)}\dots\sigma^{(a_k)}$ satisfy:
\begin{enumerate}[(i)]
\item $\sigma(a_1,\dots,a_k)\in\{0,1,-1,i,-i\}$;
\item More generally for all $1\leq j_1,\dots,j_k\leq n$\[\frac{1}{2^n}\Tr \sigma_{j_1}^{(a_1)}\dots\sigma_{j_k}^{(a_k)}\in\{0,1,-1,i,-i\};\]
\item $\sigma(a_1,\dots,a_k)$ is non-zero if and only if the parities of the numbers of $1$'s, $2$'s and $3$'s among the $a_1,\dots,a_k$ coincide.
\item If $k$ is even we have the recursion relation \[\sigma(a_1,\dots,a_k)=\sum_{j=2}^k \delta_{a_1a_j}(-1)^j\sigma(a_2,\dots,\widehat{a_j},\dots,a_k) \] where $\widehat{a_j}$ means that the $j$-th entry is omitted;
\end{enumerate} 
\end{lemma}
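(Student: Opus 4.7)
\emph{Plan.} Parts (i)–(iii) reduce to direct manipulations in the Pauli algebra, while (iv) needs an anticommutator identity combined with the cyclicity of the trace.

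For (i), the defining relations $\sigma^{(a)}\sigma^{(b)} = \delta_{ab}\,1_2 + i\,\varepsilon_{abc}\sigma^{(c)}$ (with $\varepsilon_{abc}$ the totally antisymmetric symbol on $\{1,2,3\}$) close on the finite set $\{\pm 1,\pm i\}\cdot\{1_2,\sigma^{(1)},\sigma^{(2)},\sigma^{(3)}\}$; a straightforward induction on $k$ then shows that any product of Pauli matrices is of the form $c\cdot 1_2$ or $c\cdot\sigma^{(d)}$ with $c\in\{\pm 1,\pm i\}$, and the normalised trace $\tfrac{1}{2}\Tr$ picks out the coefficient of $1_2$. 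For (ii) I would regroup the product over qubit sites: since $\sigma_j^{(a)}$ acts as the identity away from the $j$-th tensor factor,
\[
\frac{1}{2^n}\Tr\sigma_{j_1}^{(a_1)}\cdots\sigma_{j_k}^{(a_k)} = \prod_{v=1}^{n} \frac{1}{2}\Tr\!\!\prod_{\ell:\,j_\ell = v}\!\!\sigma^{(a_\ell)},
\]
with the order on the right inherited from the order on the left; each site factor lies in $\{0,\pm 1,\pm i\}$ by (i), and this set is closed under multiplication. For (iii), iteratively apply $\sigma^{(a)}\sigma^{(a)}=1_2$ (with $\sigma^{(a)}\sigma^{(b)}=-\sigma^{(b)}\sigma^{(a)}$ for $a\neq b$ only contributing signs) to cancel same-type pairs. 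What remains is, up to a scalar in $\{\pm 1,\pm i\}$, a product of at most one Pauli matrix of each type, with multiplicities $n_a\bmod 2$. Inspecting the eight parity patterns, only $(0,0,0)$ and $(1,1,1)$ yield a scalar multiple of $1_2$ (the latter via $\sigma^{(1)}\sigma^{(2)}\sigma^{(3)}=i\,1_2$), while the other six leave a single Pauli matrix, whose trace vanishes.

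For (iv) I would use the anticommutator $\sigma^{(a_1)}\sigma^{(a_j)} = 2\delta_{a_1 a_j}\,1_2 - \sigma^{(a_j)}\sigma^{(a_1)}$ to move $\sigma^{(a_1)}$ successively past $\sigma^{(a_2)},\ldots,\sigma^{(a_k)}$. Each of the $k-1$ swaps produces a contraction term (with both $\sigma^{(a_1)}$ and the swapped $\sigma^{(a_j)}$ removed) plus a sign flip on the surviving term; a short induction then yields
\[
\sigma^{(a_1)}\cdots\sigma^{(a_k)} = \sum_{j=2}^{k} 2(-1)^{j}\delta_{a_1 a_j}\,\sigma^{(a_2)}\cdots\widehat{\sigma^{(a_j)}}\cdots\sigma^{(a_k)} \;+\; (-1)^{k-1}\sigma^{(a_2)}\cdots\sigma^{(a_k)}\sigma^{(a_1)}.
\]
Taking the normalised trace and applying cyclicity to the final term turns it into $(-1)^{k-1}\sigma(a_1,\ldots,a_k)$; rearranging and dividing through by the prefactor $1-(-1)^{k-1}=2$, which is nonzero precisely when $k$ is even, produces the stated recursion.

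The only genuinely delicate step is the sign bookkeeping in (iv): one must verify by induction on the number of swaps that the accumulated sign on the contraction term arising when $\sigma^{(a_1)}$ first meets $\sigma^{(a_j)}$ is exactly $(-1)^{j}$ after the preceding $j-2$ swaps. Everything else is a routine application of the Pauli algebra.
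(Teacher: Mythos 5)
Your proposal is correct and takes essentially the same route as the paper's proof: (i)--(iii) by closure of the Pauli algebra, tensor factorisation over sites, and anticommutation-based reordering plus parity counting (the paper dismisses (i)--(ii) as trivial calculations), and (iv) by exactly the paper's argument of repeatedly applying $\sigma^{(a)}\sigma^{(b)}=2\delta_{ab}1_2-\sigma^{(b)}\sigma^{(a)}$, invoking cyclicity of the trace on the final term, and dividing by $1-(-1)^{k-1}=2$ for even $k$. The sign bookkeeping you flag does check out: the contraction produced when $\sigma^{(a_1)}$ meets $\sigma^{(a_j)}$ inherits the sign $(-1)^{j-2}=(-1)^{j}$ of the surviving term, matching the stated recursion.
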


\begin{proof}[Proof of Lemma \ref{lemma:AnkSUMwithTraces}]
For odd $k$ first note that up to a factor of $\pm 1$ we can reorder the $\sigma_{J_i}$'s in the expression $2^{-n}\Tr\sigma_{J_1}\dots\sigma_{J_k}$ since Pauli matrices either commute or anti-commute. Since for any $J_i$ the we have $\sigma_{J_i}^2=1_{2^n}$ the normalised trace reduces to $\pm 2^{-n}\Tr \sigma_{J_{i_1}}\dots \sigma_{J_{i_l}}$ with the $J_{i_1},\dots,J_{i_l}$ being all exactly those distinct $J_1,\dots,J_k$ that appear an odd number of times. By Lemma \ref{lemma:pauliMatrices}(iii) in each component of the tensor product we get a zero trace if there are either one or two different Pauli matrices acting on it. Hence the normalised trace is zero if there are one or two distinct $J_i$ appearing an odd number of times. Since $k$ is odd we therefore see that the highest order contribution comes from the term where three distinct $J_i$ appear three times and the rest appear two times. Thus for $k<9$ we see that $m_{k,n}$ is identically zero. For $k\geq 9$ we first choose the $\left(\frac{k-9}{2}+3\right)=\frac{k-3}{2}$ distinct $J_i$ to appear and then those three to appear three times. By counting the number of ways of assigning those $J_i$ to our tuples we therefore find the factor  \begin{align*}
\binom{9e(\Gamma_n)}{k/2-3/2}\binom{k/2-3/2}{3}\frac{k!}{3^3\cdot 2^{k/2-3/2}}=\frac{(9e(\Gamma_n))^{k/2-3/2}}{(k/2-3/2)!}\binom{k/2-3/2}{3}\frac{k!}{3^3\cdot 2^{k/2-3/2}}+\O{(e(\Gamma_n))^{k/2-5/2}}
\end{align*} as $n\to\infty$ from which after dividing by $(9e(\Gamma_n))^{k/2}$ the claimed asymptotics follow. The claims for even $k$ immediately follow from Lemma \ref{lemma:AnkSUM} using that the
term with $C_3^2$ vanishes by the above argument (since there are two distinct $J_i$ appearing an odd number of times).
\end{proof}

In particular this already shows that in the limit $n\to\infty$ all odd moments vanish. The situation with the sums over $A_{n,k}$ and $B_{n,k}$ in \eqref{eq:threesplit} is a little bit more delicate. For a large class of graph sequences the sum over $B_{n,k}$ is also negligible and the only contribution comes from $A_{n,k}$ where all normalised traces are equal to $1$ and the system is essentially classical. In this case the non-commutativity is actually only a small perturbation and consequently we see the same result as in the classical central limit theorem rather than a random matrix semicircle law. 
We are now ready to give a proof of Theorem \ref{thm:convergenceForGraphs}, including explicit estimates regarding the rate of convergence of the moments.

\begin{reptheorem}{thm:convergenceForGraphs}[Detailed version]
Denote the maximal vertex degree in the graph $\Gamma_n$ by $d_\text{max}(n)$. Let $\Gamma_n$ be a sequence of graphs on the vertex sets $\{1,\dots,n\}$ such that $\lim_{n\to\infty}\frac{d_{\text{max}}(n)}{e(\Gamma_n)}=0$ and let \[\Set{\alpha_{a,b,(ij)}|1\leq a,b\leq 3,~(ij)\in\Gamma_n} \] be a tight collection of independent (not necessarily identically distributed) random variables with zero mean and unit variance. Then the Hamiltonian defined 
by \[H_n^{(\Gamma_n)}\defeq \frac{1}{\sqrt{9e(\Gamma_n)}}\sum_{(ij)\in\Gamma_n}\sum_{a,b=1}^3\alpha_{a,b,(ij)}\sigma_i^{(a)}\sigma_{j}^{(b)}\] (where as a convention the edge between $i<j$ is denoted by $(ij)$) has an expected density of states which converges weakly to a standard normal distribution. The convergence rate of the moments is of order $e(\Gamma_n)^{-3/2}$ for odd moments and $\frac{d_\text{max}(n)}{e(\Gamma_n)}$ for even moments. Moreover $m_{k,n}\equiv m_k$ for $k\in\{0,1,2,3,5,7\}$.
\end{reptheorem}

\begin{proof}
First note that we can, without loss of generality, assume that the random variables are uniformly bounded and therefore have also uniformly bounded moments. This follows from a standard reduction step relying on the Hoffman-Wielandt inequality that allows us to approximate the density of states $\mu_n$ by the density of states of a Hamiltonian with truncated random variables. For details the reader is referred to the proof of \cite[Theorem 2.1.21]{anderson2010introduction} which can be adapted to our model by the tightness assumption on the random variables. This reduction step is also valid in the proofs
 of Theorems \ref{thm:pnSPINglass}, \ref{thm:convergenceForHypergraphs} and Proposition \ref{prop:StarGraph},  where we shall assume it without further explanation. 

The treatment of the sum from eq.~\eqref{eq:expansion} is performed in three steps according to the split from eq. \eqref{eq:threesplit}. Lemma~\ref{lemma:AnkSUMwithTraces} dealt with the $D_{n,k}$-part of the sum. We now consider the part of the sum over the index set $B_{n,k}$. From the condition $\frac{d_\text{max}(n)}{e(\Gamma_n)}\to 0$ as $n\to\infty$ it follows that the number $d_{j,n}$ of choosing $j$ non intersecting edges from the graph $\Gamma_n$ asymptotically behaves as $e(\Gamma_n)^j/j!$ i.e. $\lim_{n\to\infty}\frac{d_{j,n}}{e(\Gamma_n)^j/j!}=1$ for all fixed $j$. Indeed, there are $e(\Gamma_n)$ choices for the first edge $(il)$. For the next edge we can pick all edges except those including $i$ and $l$ i.e. there are at least $e(\Gamma_n)-2d_{\text{max}}(n)$ choices for the second edge. Continuing we find the bound 
\begin{align}\frac{e(\Gamma_n)^j}{j!}&\geq d_{j,n}\geq \frac{1}{j!}e(\Gamma_n)(e(\Gamma_n)-2d_{\text{max}}(n))((e(\Gamma_n)-4d_{\text{max}}(n)))\dots(e(\Gamma_n)-2(j-1)d_{\text{max}}(n))\nonumber\\
&= \frac{e(\Gamma_n)^j}{j!} \left(1- \frac{j(j-1)d_{\text{max}}(n)}{e(\Gamma_n)} +\O{\left(\frac{d_{\text{max}}(n)}{e(\Gamma_n)}\right)^2}\right)\label{eq:nonIntersectingEdges}\end{align}
 as $n\to\infty$. Dividing by $e(\Gamma_n)^j/j!$ then proves that $d_{j,n}$ asymptotically behaves as $e(\Gamma_n)^j/j!$. 

 The estimate from eq.~\eqref{eq:nonIntersectingEdges} also shows that the number of choosing $j$ edges that have at least one intersection is, to leading order, at most given by \[\frac{e(\Gamma_n)^{j-1}d_\text{max}(n)}{(j-2)!}+\O{\left(\frac{d_{\text{max}}(n)}{e(\Gamma_n)}\right)^2 e(\Gamma_n)^j}.\] Since there are $\binom{k}{2}\binom{k-2}{2}\dots\binom{2}{2}=\frac{k!}{2^{k/2}}$ ways of assigning $\frac{k}{2}$ chosen edges to $e_1,\dots,e_k$ such that each appears twice, the index set $B_{n,k}$ therefore contains at most \[9^{k/2}\frac{k!}{2^{k/2}}\frac{d_\text{max}(n)}{(k/2-2)!}e(\Gamma_n)^{k/2-1}+\O{\left(\frac{d_{\text{max}}(n)}{e(\Gamma_n)}\right)^2 e(\Gamma_n)^{k/2}}\] 
elements as $n\to\infty$. Using that the modulus of the normalised traces is at most $1$ (see Lemma \ref{lemma:pauliMatrices}) and that the expectations are all equal to $1$ due to unit variance and independence, we therefore found the bound
\begin{align}\label{eq:SUMBnk}
\left\lvert\frac{1}{(9e(\Gamma_n))^{k/2}}\sum_{(J_1,\dots,J_k)\in B_{n,k}}2^{-n}\Tr\sigma_{J_1}\dots\sigma_{J_k}\E\alpha_{J_1}\dots\alpha_{J_k}\right\lvert\nonumber\\\leq (k-1)!!\frac{k(k-2)}{4}\frac{d_{\text{max}}(n)}{e(\Gamma_n)}+\O{\left(\frac{d_\text{max}(n)}{e(\Gamma_n)}\right)^2}
\end{align} as $n\to\infty$ while $k$ is fixed.

For the summation over $A_{n,k}$ in~\eqref{eq:threesplit}, we first note that all terms under the sum are equal to $1$. Indeed, the expectations are again $1$ by independence and unit variance. For the traces we find that since all distinct $J_i$ act on distinct qubits,  in all components there is either an identity matrix or a product of two identical Pauli matrices i.e. again identity matrices. Again similarly to eq.~\eqref{eq:nonIntersectingEdges} we can estimate $\abs{A_{n,k}}$ to get 
\begin{align}
(k-1)!!&\geq(9e(\Gamma_n))^{-k/2}\abs{A_{n,k}}\geq (k-1)!!\left(1-\frac{k(k-2)}{4}\frac{d_{\text{max}}(n)}{e(\Gamma_n)}\right)+\O{\left(\frac{d_\text{max}(n)}{e(\Gamma_n)}\right)^2}\label{eq:SUMCnk}
\end{align}	
as $n\to\infty$ while $k$ is fixed. 

The moments $m_k$ of a standard normal distribution are given by $(k-1)!!$ for even $k$ and $0$ for odd $k$ and satisfy Carleman's continuity condition. Using the bound in Lemma \ref{lemma:AnkSUMwithTraces} together with eqs.~\eqref{eq:SUMBnk} and~\eqref{eq:SUMCnk} we arrive at 
\begin{align*}
\abs{m_{k,n}-(k-1)!!}\leq& \left\lvert(9e(\Gamma_n))^{-k/2}\sum_{(J_1,\dots,J_k)\in D_{n,k}}2^{-n}\Tr\sigma_{J_1}\dots\sigma_{J_k}\E\alpha_{J_1}\dots\alpha_{J_k}\right\lvert \\
&+\left\lvert(9e(\Gamma_n))^{-k/2}\sum_{(J_1,\dots,J_k)\in B_{n,k}}2^{-n}\Tr\sigma_{J_1}\dots\sigma_{J_k}\E\alpha_{J_1}\dots\alpha_{J_k}\right\lvert\\
&+\left\lvert(9e(\Gamma_n))^{-k/2}\sum_{(J_1,\dots,J_k)\in A_{n,k}}2^{-n}\Tr\sigma_{J_1}\dots\sigma_{J_k}\E\alpha_{J_1}\dots\alpha_{J_k}-(k-1)!!\right\lvert\\ 
\leq&(k-1)!!k(k-2)\left(\frac{C_4}{36 e(\Gamma_n)}+\frac{d_\text{max}(n)}{2e(\Gamma_n)}\right)+\O{\left(\frac{d_\text{max}(n)}{e(\Gamma_n)}\right)^2}
\end{align*} for (fixed) even $k\geq 9$ as $n\to\infty$, whereas
\begin{align*}\abs{m_{k,n}-0}&= \left\lvert(9e(\Gamma_n))^{-k/2}\sum_{(J_1,\dots,J_k)\in D_{n,k}}2^{-n}\Tr\sigma_{J_1}\dots\sigma_{J_k}\E\alpha_{J_1}\dots\alpha_{J_k}\right\lvert \\&\leq \frac{k!!}{e(\Gamma_n)^{3/2}}\frac{C_3^3(k-1)(k-3)(k-5)(k-7)}{3^{5}}+\O{e(\Gamma_n)^{-5/2}} \end{align*}
for (fixed) odd $k$ as $n\to\infty$. This shows the convergence of each moment, thus the weak convergence. The claim that the moments $m_{k,n}$ agree identically with $m_k$ for $k\in\{1,2,3,5,7\}$ follows immediately from Lemma \ref{lemma:AnkSUMwithTraces}.
\end{proof}

As the following example shows, the assumption on the growth of the maximal degree is  necessary. Let 
\[H_n^{\text{(star)}}\defeq\frac{1}{\sqrt{9(n-1)}}\sum_{j=2}^n\sum_{a,b=1}^3 \alpha_{a,b,j}\sigma_1^{(a)}\sigma_j^{(b)} \] 
be the Hamiltonian corresponding to the star graph in which, say, the vertex $1$ is connected to all other vertices while there are no edges between the rest. This model shows a significantly different limiting behaviour (a proof is given in the Appendix, see also Figure \ref{fig:NumericalDOS}):

\begin{figure}
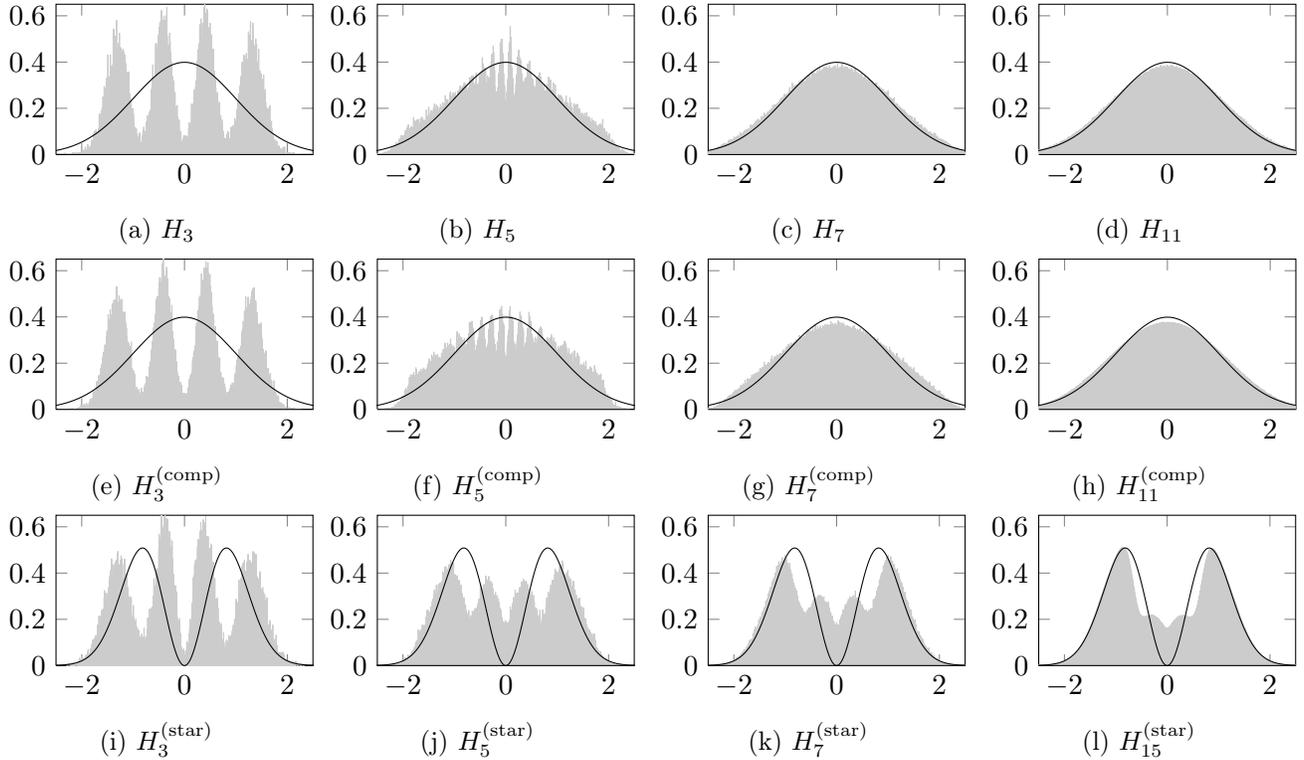

\begin{subfigure}{.24\textwidth}
  \centering
  \newlength\figureheight \newlength\figurewidth \setlength\figureheight{2cm} \setlength\figurewidth{.8\linewidth} \input{L3.tikz}
  \caption{$H_3$}
\end{subfigure}%
\begin{subfigure}{.24\textwidth}
  \centering
  \setlength\figureheight{2cm} \setlength\figurewidth{.8\linewidth}\input{L5.tikz}
  \caption{$H_5$}
\end{subfigure}
\begin{subfigure}{.24\textwidth}
  \centering
  \setlength\figureheight{2cm} \setlength\figurewidth{.8\linewidth}\input{L7.tikz}
  \caption{$H_7$}
\end{subfigure}
\begin{subfigure}{.24\textwidth}
  \centering
   \setlength\figureheight{2cm} \setlength\figurewidth{.8\linewidth}\input{L11.tikz}
  \caption{$H_{11}$}
\end{subfigure}\\
\begin{subfigure}{.24\textwidth}
  \centering
  \setlength\figureheight{2cm} \setlength\figurewidth{.8\linewidth} \input{L3COMP.tikz}
\caption{$H_3^{\text{(comp)}}$}
\end{subfigure}%
\begin{subfigure}{.24\textwidth}
  \centering
  \setlength\figureheight{2cm} \setlength\figurewidth{.8\linewidth}\input{L5COMP.tikz}
\caption{$H_5^{\text{(comp)}}$}
\end{subfigure}
\begin{subfigure}{.24\textwidth}
  \centering
  \setlength\figureheight{2cm} \setlength\figurewidth{.8\linewidth}\input{L7COMP.tikz}
\caption{$H_7^{\text{(comp)}}$}
\end{subfigure}
\begin{subfigure}{.24\textwidth}
  \centering
   \setlength\figureheight{2cm} \setlength\figurewidth{.8\linewidth}\input{L11COMP.tikz}
  \caption{$H_{11}^{\text{(comp)}}$}
\end{subfigure}\\
\begin{subfigure}{.24\textwidth}
  \centering
  \setlength\figureheight{2cm} \setlength\figurewidth{.8\linewidth} \input{L3STAR.tikz}
\caption{$H_3^{\text{(star)}}$}
\end{subfigure}%
\begin{subfigure}{.24\textwidth}
  \centering
  \setlength\figureheight{2cm} \setlength\figurewidth{.8\linewidth}\input{L5STAR.tikz}
\caption{$H_5^{\text{(star)}}$}
\end{subfigure}
\begin{subfigure}{.24\textwidth}
  \centering
  \setlength\figureheight{2cm} \setlength\figurewidth{.8\linewidth}\input{L7STAR.tikz}
\caption{$H_7^{\text{(star)}}$}
\end{subfigure}
\begin{subfigure}{.24\textwidth}
  \centering
   \setlength\figureheight{2cm} \setlength\figurewidth{.8\linewidth}\input{L14STAR.tikz}
  \caption{$H_{15}^{\text{(star)}}$}
\end{subfigure}
\caption{Empirical density of states for sample size $1000$ including the limiting density}
\label{fig:NumericalDOS}
\end{figure}

\begin{prop}\label{prop:StarGraph}
Suppose that the random variables $\Set{\alpha_{a,b,j}|n\in\N, 1\leq a,b\leq 3, 2 \leq j\leq n}$ are independent, have zero mean, unit variance and form a tight family of random variables. The expected density of states of $H_n^{(\text{star})}$ then converges weakly to a distribution with density 
\[ \rho(x)= 3\sqrt{\frac{3}{2\pi}}x^2 e^{-3x^2/2}\] as $n\to\infty$.
\end{prop}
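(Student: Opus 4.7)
The decisive feature of the star graph is that every edge shares the distinguished vertex $1$, which lets us separate the Pauli factors on qubit $1$ from those on qubits $\{2,\dots,n\}$. Writing
\[
H_n^{(\text{star})} = \sum_{a=1}^{3}\sigma_1^{(a)}\otimes Y_a, \qquad Y_a\defeq \frac{1}{\sqrt{9(n-1)}}\sum_{j=2}^{n}\sum_{b=1}^{3}\alpha_{a,b,j}\sigma_j^{(b)},
\]
the factors $\sigma_1^{(a_i)}$ and $Y_{a_{i'}}$ act on disjoint tensor components, so the $k$-th moment splits as
\[
m_{k,n}=\sum_{a_1,\dots,a_k=1}^{3}\frac{1}{2}\Tr\bigl(\sigma^{(a_1)}\cdots\sigma^{(a_k)}\bigr)\cdot \tau_n\bigl(Y_{a_1}\cdots Y_{a_k}\bigr),
\]
where $\tau_n(\,\cdot\,)\defeq 2^{-(n-1)}\E\Tr(\,\cdot\,)$. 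My plan is to show that $\tau_n$, restricted to polynomials in $Y_1,Y_2,Y_3$, converges to the expectation against three independent centred Gaussians $\tilde Y_1,\tilde Y_2,\tilde Y_3$ with common variance $1/3$, and then to evaluate the resulting moments explicitly.

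After a Hoffman--Wielandt truncation step as in the proof of Theorem \ref{thm:convergenceForGraphs}, I may assume the $\alpha$'s have uniformly bounded moments. Expanding the $Y_{a_i}$'s and using that $\sigma_j^{(b)}$ commutes with $\sigma_{j'}^{(b')}$ for $j\neq j'$, the quantum trace factorises over the qubits:
\[
\tau_n(Y_{a_1}\cdots Y_{a_k})=\frac{1}{(9(n-1))^{k/2}}\sum_{\mathbf j,\mathbf b}\E\bigl[\alpha_{a_1,b_1,j_1}\cdots\alpha_{a_k,b_k,j_k}\bigr]\prod_{\ell=2}^{n}\frac{1}{2}\Tr\!\Bigl(\prod_{i:\,j_i=\ell}\sigma^{(b_i)}\Bigr).
\]
Since $j$ ranges over $n-1$ values and the normalisation is $(n-1)^{k/2}$, the leading contribution comes from configurations in which the $k$ triples $(a_i,b_i,j_i)$ pair up into $k/2$ classes of size two sitting on distinct qubits. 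For such configurations every local trace equals $1$, each paired $\alpha$ contributes unit variance, and the count factorises to give $3^{k/2}(n-1)^{k/2}/(9(n-1))^{k/2}=3^{-k/2}$ times the number of pairings $\pi$ of $\{1,\dots,k\}$ that respect the colouring $i\mapsto a_i$. The same combinatorial bookkeeping as in Lemmas \ref{lemma:AnkSUM} and \ref{lemma:AnkSUMwithTraces} shows that all remaining configurations (triples of multiplicity $\geq 3$, or two pairs sharing a qubit, which produce mixed Pauli products like $\sigma^{(b)}\sigma^{(b')}\sigma^{(b)}\sigma^{(b')}$ giving $\pm 1$) are $\O{n^{-1}}$. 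This asymptotic count is precisely Wick's formula for iid Gaussians with variance $1/3$.

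Substituting back yields
\[
\lim_{n\to\infty} m_{k,n}=\E\,\frac{1}{2}\Tr\!\Bigl(\sum_{a=1}^{3}\tilde Y_a\,\sigma^{(a)}\Bigr)^{\!k}.
\]
Because the Pauli matrices anticommute pairwise, $M\defeq\sum_a\tilde Y_a\sigma^{(a)}$ satisfies $M^2=(\tilde Y_1^2+\tilde Y_2^2+\tilde Y_3^2)I$, so $M$ has eigenvalues $\pm R$ with $R\defeq\sqrt{\tilde Y_1^2+\tilde Y_2^2+\tilde Y_3^2}$, and the limit equals $\E R^k$ for even $k$ and $0$ for odd $k$. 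A spherical-coordinate calculation in $\R^3$ gives that $R$ has density $6\sqrt{3/(2\pi)}\,r^2 e^{-3r^2/2}$ on $[0,\infty)$, so the symmetrised variable $\epsilon R$, with $\epsilon\in\{\pm1\}$ a fair coin flip independent of $R$, has the claimed density $\rho(x)=3\sqrt{3/(2\pi)}\,x^2 e^{-3x^2/2}$. Gaussian-type tails ensure Carleman's condition is satisfied, which promotes moment convergence to weak convergence.

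The main obstacle is the middle paragraph: establishing that only $(a,b,j)$-classes of multiplicity exactly two on distinct qubits contribute in the limit, and controlling the collision terms where two pairs share a qubit (which force non-trivial sign cancellations in the internal Pauli traces). Once this Gaussian approximation is in place, the reduction to the $2\times 2$ matrix $M$ and the identification of the density are completely explicit.
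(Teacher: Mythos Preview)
Your proof is correct and takes a genuinely different route from the paper's. Both proofs start from the same moment decomposition, isolating the Pauli factors on qubit~$1$ from those on qubits $2,\dots,n$, and both invoke Lemma~\ref{lemma:AnkSUM} and the combinatorial estimates behind Theorem~\ref{thm:convergenceForGraphs} to reduce to exact pairings on distinct leaves. From there the arguments diverge. The paper keeps the sum over pair-partitions explicit, arriving at
\[
m_k=\frac{3^{-k/2}}{(k/2)!}\sum_{a_1,\dots,a_{k/2}=1}^3\sum_{\pi\in S_k}\tfrac12\Tr\sigma^{(a_{\pi(1)})}\cdots\sigma^{(a_{\pi(k)})},
\]
and then evaluates the inner sum by an induction based on the anticommutator recursion of Lemma~\ref{lemma:pauliMatrices}(iv), obtaining $f(k)=(k+1)!/2^{k/2}$; the density is finally recovered by summing the characteristic function and Fourier inverting. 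You instead recognise that the pairing count is exactly Wick's rule, so that $(Y_1,Y_2,Y_3)$ becomes a triple of independent $N(0,1/3)$ scalars in the limit; the algebraic observation $M^2=(\tilde Y_1^2+\tilde Y_2^2+\tilde Y_3^2)I$ then bypasses the combinatorics entirely and identifies the limiting law as the symmetrised law of $\lvert\tilde Y\rvert$, i.e.\ a rescaled Maxwell distribution. Your argument is shorter and more conceptual, and makes the appearance of $x^2 e^{-3x^2/2}$ transparent; the paper's approach is more self-contained (no appeal to a joint CLT interpretation) and produces the closed-form moments $m_k=(k+1)!/\bigl(6^{k/2}(k/2)!\bigr)$ as a by-product.
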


\section{Hypergraphs}
A hypergraph is a generalised graph in which any hyperedge can contain a variable number of vertices. Formally a hypergraph on a vertex set $V$ is any subset of $\mathcal{P}(V)\setminus\emptyset$. A hyperedge $e$ containing the (distinct) vertices $i_1<\dots< i_l$ will be denoted by $(i_1\dots i_l)$. We shall use the notation $\abs{e}\defeq l$ for the number of vertices in a given hyperedge $e=(i_1\dots i_l)$. Just as in the traditional graph, the degree of a vertex is defined to be the number of hyperedges containing the given vertex. The total number of hyperedges is again denoted by $e(\Gamma_n)$. For a given hypergraph $\Gamma_n$ on the vertex set $\{1,\dots,n\}$ we introduce the notations \[\alpha_J\defeq \alpha_{({\bf a},e)}\defeq \alpha_{a_1,\dots,a_l,(i_1\dots i_l)},
\qquad \sigma_J\defeq \sigma_{({\bf a},e)}\defeq \sigma_{i_1}^{(a_1)}\dots\sigma_{i_l}^{(a_l)}\] and $\abs{J}=\abs{e}=l$ for \[J=({\bf a},e)=(a_1,\dots,a_l,(i_1\dots i_l))\in I_n\defeq \Set{({\bf a},e) | e\in \Gamma_n, {\bf a}\in \{1,2,3\}^{\abs{e}}}. \] The generalised Hamiltonian corresponding to the hypergraph $\Gamma_n$ is defined to be
\begin{align} H_n^{(\Gamma_n)}\defeq \frac{1}{\sqrt{e(\Gamma_n)}}\sum_{e\in\Gamma_n}\frac{1}{3^{\abs{e}/2}}\sum_{{\bf a}\in \{1,2,3\}^{\abs{e}}}\alpha_{({\bf a},e)}\sigma_{({\bf a},e)}.\label{eq:HnHypergraph}\end{align}
We again want to study the moments 
\begin{align}m_{k,n}=2^{-n}\Tr\E (H_n^{(\Gamma_n)})^k=(e(\Gamma_n))^{-k/2}\sum_{J_1,\dots,J_k\in I_n}3^{-(\abs{J_1}+\dots+\abs{J_k})/2}2^{-n}\Tr\sigma_{J_1}\dots\sigma_{J_k}\E\alpha_{J_1}\dots\alpha_{J_k} \label{eq:expansionHypergraph}\end{align} in the limit $n\to\infty$. Lemma \ref{lemma:AnkSUM} again applies and immediately shows that we can restrict our attention to those summands where the $J_1,\dots,J_k$ appear in pairs of two. If the hyperedges of the $\frac{k}{2}$ distinct $J_i$'s are disjoint we can reorder the $\sigma_{J_i}$'s freely and therefore get a normalised trace of $1$. As in the proof of Theorem \ref{thm:convergenceForGraphs} we establish a sufficient criterion on the sequence of graphs 
such that among all families of   $\frac{k}{2}$ edges the proportion of those that have
 mutually disjoint edges approaches $1$. 

As for conventional graphs, the \emph{line graph} $L(\Gamma_n)$ of a hypergraph $\Gamma_n$ is graph whose vertices are the hyperedges $\{e_1,\dots,e_M\}$ of $\Gamma_n$. Two vertices of $L(\Gamma_n)$ (i.e. hyperedges of $\Gamma_n$) $e_1, e_2$ are adjacent (connected
by an edge in the line graph) if and only if $e_1$ and $e_2$ are non-disjoint and so the edges of $L(\Gamma_n)$ are given by \[\Set{(e_i e_j) | 1\leq i,j\leq M, e_i\cap e_j\ne \emptyset}.\] Given some fixed hyperedge $e_1\in \Gamma_n$ there are at least $e(\Gamma_n)-d_\text{max}(L(\Gamma_n))$ hyperedges $e_2$ disjoint from $e_1$, where the maximal hyperedge degree $d_\text{max}^{(e)}(n)\defeq d_\text{max}(L(\Gamma_n))$ is the maximal vertex degree of the line graph. Continuing we find for the number $d_{j,n}$ of choices of $j$ disjoint hyperedges from $\Gamma_n$ the bound
\begin{align}
\frac{e(\Gamma_n)^j}{j!}&\geq d_{j,n}\geq \frac{1}{j!}e(\Gamma_n)(e(\Gamma_n)-d_\text{max}^{(e)}(n))\dots(e(\Gamma_n)-(j-1)d_\text{max}^{(e)}(n))\nonumber\\
&=\frac{e(\Gamma_n)^j}{j!}\left(1-\frac{j(j-1)}{2}\frac{d_\text{max}^{(e)}(n)}{e(\Gamma_n)}+\O{\left(\frac{d_\text{max}^{(e)}(n)}{e(\Gamma_n)}\right)^2}\right)
\end{align} as $n\to \infty$ while $j$ is fixed if $\lim_{n\to\infty}\frac{d_\text{max}^{(e)}(n)}{e(\Gamma_n)}=0$. Following the proof of Theorem \ref{thm:convergenceForGraphs} we therefore proved its generalisation for hypergraphs:

\begin{theorem}\label{thm:convergenceForHypergraphs}
Let $\Gamma_n$ be a sequence of graphs on the vertex sets $\{1,\dots,n\}$ such that $\lim_{n\to\infty}\frac{d_{\text{max}}^{(e)}(n)}{e(\Gamma_n)}=0$ and let \[\Set{\alpha_{({\bf a},e)}| e\in \Gamma_n, {\bf a}\in\{1,2,3\}^{\abs{e}} }\] be a  tight  collection of independent (not necessarily identically distributed) random variables with zero mean and unit variance. Then the Hamiltonian defined in~\eqref{eq:HnHypergraph} has a density of states which converges weakly to a standard normal distribution.
\end{theorem}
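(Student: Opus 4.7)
The plan is to adapt the three-way split and moment method already developed for Theorem~\ref{thm:convergenceForGraphs}, with the bookkeeping adjusted to accommodate hyperedges of varying size. As in the graph case, the first step is the standard Hoffman--Wielandt truncation, which allows us to assume without loss of generality that the couplings $\alpha_J$ are uniformly bounded and hence possess uniformly bounded moments of every order. This reduction is valid because the family is tight.

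Next, I would partition $I_n^k=D_{n,k}\cup A_{n,k}\cup B_{n,k}$ exactly as in \eqref{eq:threesplit}, where $D_{n,k}=I_n^k\setminus P_2(I_n^k)$, and within $P_2(I_n^k)$ we let $A_{n,k}$ denote those tuples in which the $k/2$ distinct hyperedges are pairwise disjoint, and $B_{n,k}$ its complement. The three estimates to make are then: (i) the $D_{n,k}$ contribution vanishes by a direct combinatorial bound; (ii) the $B_{n,k}$ contribution vanishes under the hyperedge-degree hypothesis; (iii) the $A_{n,k}$ contribution converges to $(k-1)!!$ for even $k$.

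For (i), applying Lemma~\ref{lemma:AnkSUM} with $X_n=I_n$ and using $|2^{-n}\Tr\sigma_{J_1}\cdots\sigma_{J_k}|\le 1$ together with the fact that every hyperedge has at least one vertex (so $|I_n|\ge 3\,e(\Gamma_n)$) gives a bound of order $|I_n|^{-1/2}\ll 1$ after restoring the combined normalisation $e(\Gamma_n)^{-k/2}\prod 3^{-|J_i|/2}$; the key point is that whenever some $J_i$ appears at most once, independence and zero mean kill the term, so only ``imbalanced'' repetition patterns with at most $k/2-1$ distinct values survive, and these are lower order. For (iii), the counting is the new content: if $\{e_1,\dots,e_{k/2}\}$ is a set of pairwise disjoint hyperedges, then since each $J_i$ occurs exactly twice, the index vectors $\mathbf{a}_i\in\{1,2,3\}^{|e_i|}$ must match within each pair, and the number of ordered tuples built from this data equals $(k-1)!!\cdot(k/2)!\cdot\prod_i 3^{|e_i|}$. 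Moreover, the disjointness ensures all $\sigma_{J_i}$'s commute pairwise, so after reordering we get products of squares of Pauli matrices and the normalised trace equals $1$; independence and unit variance make the expectation equal $1$ as well. Crucially, the factor $\prod_i 3^{|e_i|}$ from the $\mathbf{a}$-sum exactly cancels $\prod_i 3^{-|e_i|}=3^{-(|J_1|+\cdots+|J_k|)/2}$, so the $A_{n,k}$-contribution to $m_{k,n}$ equals $e(\Gamma_n)^{-k/2}\cdot(k-1)!!\cdot(k/2)!\cdot d_{k/2,n}$, which tends to $(k-1)!!$ by the asymptotic $d_{k/2,n}\sim e(\Gamma_n)^{k/2}/(k/2)!$ derived in the displayed paragraph preceding the theorem.

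For (ii), the same counting shows that the $B_{n,k}$ contribution is dominated by $(k-1)!!\,(k/2)!$ times $e(\Gamma_n)^{-k/2}$ times the number of unordered selections of $k/2$ hyperedges with \emph{at least one} intersection; by iterating the estimate $d_{j,n}\le d_{j-1,n}\cdot(e(\Gamma_n)-(j-1)d_{\max}^{(e)}(n))$ against the trivial upper bound $e(\Gamma_n)^j/j!$, this number is $O\!\bigl(e(\Gamma_n)^{k/2-1}d_{\max}^{(e)}(n)\bigr)$, hence vanishes after dividing by $e(\Gamma_n)^{k/2}$ by hypothesis. Finally, the odd-moment case is handled identically to the graph proof, since $(k-1)!!=0$ convention applies and $D_{n,k}$ still controls everything; collecting all three estimates yields $m_{k,n}\to(k-1)!!$ (even $k$) and $m_{k,n}\to0$ (odd $k$), which are the moments of the standard normal, and Carleman's condition concludes the weak convergence. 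The main technical obstacle is ensuring the normalisation factor $3^{-|e|/2}$ genuinely decouples from the combinatorial counting for hyperedges of mixed sizes, but the exact cancellation between $3^{-|e|}$ and the $\mathbf{a}$-sum $3^{|e|}$ makes this transparent.
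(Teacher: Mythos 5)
Your proposal is correct and follows essentially the same route as the paper: the same three-way split $D_{n,k}\cup A_{n,k}\cup B_{n,k}$ with Lemma \ref{lemma:AnkSUM} handling the non-paired terms, and the line-graph/hyperedge-degree estimate $d_{j,n}\geq \frac{1}{j!}\,e(\Gamma_n)(e(\Gamma_n)-d_{\max}^{(e)}(n))\cdots(e(\Gamma_n)-(j-1)d_{\max}^{(e)}(n))$ controlling the $A_{n,k}$- and $B_{n,k}$-contributions, which is precisely the one new ingredient the paper introduces before declaring that the proof of Theorem \ref{thm:convergenceForGraphs} carries over. Your explicit observation that the $\mathbf{a}$-sums $\prod_i 3^{\abs{e_i}}$ exactly cancel the normalisation $3^{-(\abs{J_1}+\dots+\abs{J_k})/2}$ on paired tuples is a detail the paper leaves implicit, but it is the same argument.
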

For $2$-uniform hypergraphs (meaning that all edges connect $2$ vertices) the statement of this Theorem is equivalent to Theorem \ref{thm:convergenceForGraphs}. More generally the theorem also covers a sequence of $p_n$-uniform graphs $\Gamma_n$ corresponding to the $p_n$-spin glasses. An interesting special case is the sequence of complete $p_n$-uniform hypergraphs in which the hyperedges connect any $p_n$ distinct vertices. The corresponding Hamiltonians are given by \[H_n^{(p_n-\text{glass})}\defeq 3^{-p_n/2}\binom{n}{p_n}^{-1/2} \sum_{1\leq i_1<\dots<i_{p_n}\leq n}\sum_{a_1,\dots,a_{p_n}=1}^3\alpha_{a_1,\dots,a_{p_n},(i_1\dots i_{p_n})}\sigma_{i_1}^{(a_1)}\dots\sigma_{i_{p_n}}^{(a_{p_n})}.\] In this case the degree of any hyperedge is \[\deg(i_1\dots i_{p_n})=\binom{n}{p_n}-\binom{n-p_n}{p_n}, \] while the total number of hyperedges is given by $e(\Gamma_n)=\binom{n}{p_n}$. Since \[\lim_{n\to\infty} \frac{\binom{n-p_n}{p_n}}{\binom{n}{p_n}}=
\begin{cases}
1&\text{if } p_n\ll \sqrt{n},\\
0&\text{if } p_n\gg \sqrt{n},\\
e^{-\alpha^2}&\text{if }\lim_{n\to\infty}\frac{p_n}{\sqrt{n}}=\alpha\in (0,\infty)
\end{cases}\] (see Lemma \ref{lem:intersectionAsymptotics}, a proof is given in the appendix) this $p_n$-spin glass model fulfils the condition of Theorem \ref{thm:convergenceForHypergraphs} if and only if $p_n$ grows slower than $\sqrt{n}$. 

We now turn to the question whether for $p_n\gg \sqrt{n}$, the expected density of states of $H_n^{(p_n-\text{glass})}$ indeed exhibits a different limiting behaviour. As Theorem \ref{thm:pnSPINglass} shows, this is indeed the case and for $p_n$ growing faster than $\sqrt{n}$ the density of states approaches a semicircle distribution. This also shows that the condition about the maximal edge degree in Theorem \ref{thm:convergenceForHypergraphs} is in a certain sense optimal. We start with a combinatorial lemma,
 whose proof is given in the appendix.

\begin{lemma}[Asymptotics of intersections of growing sets]~\label{lem:intersectionAsymptotics}Let $a_n$, $b_n$ and $c_n$ be three 
sequences taking values in $\{1,\dots,n\}$.
\begin{enumerate}[(i)]
\item Given any subsets $A_n\subset\{1,\dots,n\}$ with $a_n$ elements, the proportion of $B_n\subset\{1,\dots,n\}$ with $b_n$ elements that have a non-empty intersection with $A_n$ goes to one if and only if $a_nb_n$ grows faster than $n$. More precisely it holds that
\[\lim_{n\to\infty}\frac{\binom{n}{b_n}-\binom{n-a_n}{b_n}}{\binom{n}{b_n}}=\begin{cases}
1&\text{if } a_nb_n\gg n,\\
0&\text{if } a_nb_n\ll n,\\
1-e^{-\lambda}&\text{if } \lim_{n\to\infty}\frac{a_nb_n}{n}=\lambda\in (0,\infty).
\end{cases}
\]
\item Given any subsets $A_n\subset\{1,\dots,n\}$ with $a_n$ elements, the proportion of $B_n\subset\{1,\dots,n\}$ with $b_n$ elements that share at least $c_n$ elements with $A_n$ goes to 1, i.e.
\[\lim_{n\to\infty}\frac{\abs{\Set{B_n\subset\{1,\dots,n\} | \abs{B_n}=b_n, \abs{A_n\cap B_n}\geq c_n }}}{\binom{n}{b_n} }=1, \] 
 provided $a_n b_n\gg n$ and $c_n\ll \frac{a_nb_n}{n}$.
\end{enumerate}
\end{lemma}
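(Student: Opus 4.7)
\medskip

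The plan is to treat the two parts separately. Part (i) is purely a computation with the ratio $R_n := \binom{n-a_n}{b_n}/\binom{n}{b_n}$; part (ii) admits a clean probabilistic interpretation via the hypergeometric distribution.

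For part (i), I would first rewrite the ratio as the telescoping product
\[
R_n = \prod_{k=0}^{b_n-1}\left(1-\frac{a_n}{n-k}\right).
\]
The proof then splits into the three regimes governed by $\mu_n := a_n b_n/n$. In the subcritical and critical regimes $\mu_n\to\lambda\in[0,\infty)$, since $a_n,b_n\geq 1$ and $a_nb_n = O(n)$ one has $a_n, b_n = o(n)$, so that $a_n/(n-k)$ is uniformly small over $k=0,\dots,b_n-1$. A Taylor expansion of the logarithm then yields
\[
\log R_n = -a_n \sum_{k=0}^{b_n-1}\frac{1}{n-k} + O\!\left(\frac{a_n^2 b_n}{(n-b_n)^2}\right) = -a_n\log\frac{n}{n-b_n} + o(1),
\]
and since $b_n/n\to 0$ the main term is asymptotic to $-a_nb_n/n\to -\lambda$. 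Thus $R_n\to e^{-\lambda}$, which covers the cases $\lambda=0$ and $\lambda\in(0,\infty)$. In the supercritical regime $\mu_n\to\infty$, I would use the monotone bound
\[
R_n \leq \left(1-\frac{a_n}{n}\right)^{b_n}\leq \exp(-a_nb_n/n) = e^{-\mu_n}\to 0,
\]
which immediately gives $1-R_n\to 1$.

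For part (ii), I would reinterpret the quotient as a probability: if $B_n$ is drawn uniformly from the $b_n$-subsets of $\{1,\dots,n\}$, then $X_n := |A_n\cap B_n|$ is hypergeometric with mean $\mu_n = a_nb_n/n$ and variance $\sigma_n^2 = \mu_n(1-a_n/n)(n-b_n)/(n-1)\leq \mu_n$. Chebyshev's inequality then gives
\[
\P(X_n < c_n) \leq \P\!\left(|X_n-\mu_n|\geq \mu_n-c_n\right)\leq \frac{\mu_n}{(\mu_n-c_n)^2},
\]
and since $c_n\ll \mu_n$ and $\mu_n\to\infty$, the right-hand side is $(1+o(1))/\mu_n\to 0$, which is the desired convergence.

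The main obstacle is bookkeeping in part (i): one must verify that the integer constraints $a_n,b_n\in\{1,\dots,n\}$ together with $a_nb_n=O(n)$ really do force both sequences to be $o(n)$, so that the uniform Taylor expansion is justified; and then one must confirm that the quadratic error term is genuinely $o(1)$ rather than being comparable to the main term. Part (ii) is routine once the hypergeometric setup is in place, but the reduction of a purely combinatorial counting problem to Chebyshev's inequality is the conceptual step that makes the argument short.
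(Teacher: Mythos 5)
Your proposal follows the paper's proof essentially step for step: part (i) via the telescoping product $\prod_{k=0}^{b_n-1}\bigl(1-\frac{a_n}{n-k}\bigr)$ compared with exponentials, and part (ii) via the hypergeometric distribution with mean $\mu_n=a_nb_n/n$ and variance at most $\mu_n$ (the paper leaves the Chebyshev step implicit; you make it explicit, which is fine).

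One point does need correcting. The implication you invoke in the critical/subcritical regime --- that $a_n,b_n\ge 1$ together with $a_nb_n=O(n)$ forces $a_n,b_n=o(n)$ --- is false: take $b_n=1$ and $a_n=\lceil\lambda n\rceil$, so that $a_nb_n/n\to\lambda$ while $a_n/n\to\lambda\neq 0$. In that example $\binom{n-a_n}{b_n}/\binom{n}{b_n}=(n-a_n)/n\to 1-\lambda\neq e^{-\lambda}$, so not only does your uniform Taylor expansion break down, the limit claimed in the lemma itself fails; the statement tacitly requires something like $\max(a_n,b_n)=o(n)$, and the paper's own proof makes the same silent assumption when it replaces $b_n\log(1-a_n/n)$ by $-a_nb_n/n$ and $n-b_n+1$ by $n$. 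In every application in the paper one has $a_n=b_n=p_n=O(\sqrt n)$, so nothing downstream is affected, but as written your justification asserts a false implication: either add the hypothesis $a_n,b_n=o(n)$ (noting it is harmless where the lemma is used) or restrict to that case explicitly. Your supercritical bound $R_n\le (1-a_n/n)^{b_n}\le e^{-a_nb_n/n}$ and your part (ii) are correct as stated, modulo the trivial remark that $R_n=0$ whenever $a_n+b_n>n$.
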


\begin{figure}
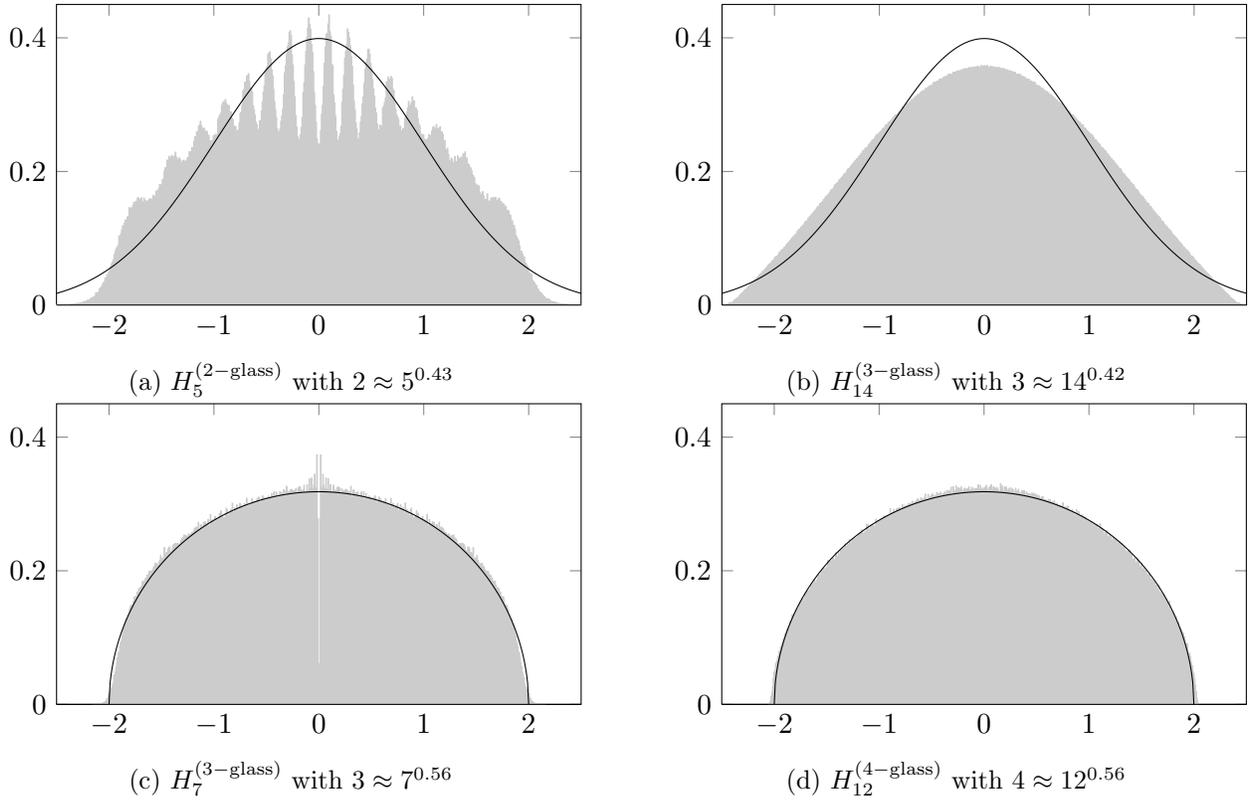

\begin{subfigure}{.49\textwidth}
  \centering
  \setlength\figureheight{4cm} \setlength\figurewidth{.8\linewidth}\input{L5p2GLASS.tikz}
  \caption{$H_{5}^{(2-\text{glass})}$ with $2\approx 5^{0.43}$}
\end{subfigure}
\begin{subfigure}{.49\textwidth}
  \centering
  \setlength\figureheight{4cm} \setlength\figurewidth{.8\linewidth}\input{L14p3GLASS.tikz}
  \caption{$H_{14}^{(3-\text{glass})}$ with $3\approx 14^{0.42}$}
\end{subfigure}\\
\begin{subfigure}{.49\textwidth}
  \centering
  \setlength\figureheight{4cm} \setlength\figurewidth{.8\linewidth}\input{L7p3GLASS.tikz}
  \caption{$H_{7}^{(3-\text{glass})}$ with $3\approx 7^{0.56}$}
\end{subfigure}
\begin{subfigure}{.49\textwidth}
  \centering
  \setlength\figureheight{4cm} \setlength\figurewidth{.8\linewidth}\input{L12p4GLASS.tikz}
  \caption{$H_{12}^{(4-\text{glass})}$ with $4\approx 12^{0.56}$}
\end{subfigure}

\caption{Empirical density of states of some $p$-spin glass Hamiltonian with $p\approx n^{0.42}$ and $p\approx n^{0.56}$}
\label{fig:NumericalDOSpglass}
\end{figure}

\begin{proof}[Proof of Theorem \ref{thm:pnSPINglass}]
As already mentioned the first claim is a immediate consequence of Theorem \ref{thm:convergenceForHypergraphs} and the estimate from Lemma \ref{lem:intersectionAsymptotics}. 

Now assume that $p_n$ grows faster than $\sqrt{n}$. As before, we compute the moments and due to Lemma \ref{lemma:AnkSUM} again know that the odd moments vanish and for even moments we only have to consider those tuples of \begin{align*} J_i\in I_n\defeq\Set{({\bf a},(i_1\dots i_{p_n})) | {\bf a}\in\{1,2,3\}^{p_n}, 1\leq i_1<\dots<i_{p_n}\leq n}\end{align*} which come in pairs of two. Using the already established short hand notation for the $\sigma_{J_i}$ we have, for even $k$, 
 \begin{align*}
m_{k,n}&\approx 3^{-kp_n/2}\binom{n}{p_n}^{-k/2}\sum_{(J_1,\dots,J_k)\in P_2(I_n^k)} 2^{-n}\Tr \sigma_{J_1}\dots\sigma_{J_k}.\end{align*}
 (here $\approx$ means \emph{is equal in the limit $n\to\infty$}).
 We now rephrase condition $(J_1\dots J_k)\in P_2(I_n^k)$. The tuples can be thought of being constructed by first drawing $\frac{k}{2}$ distinct $J_i$ from $I_n$ and then assigning those $\frac{k}{2}$ $J_i$'s to the tuples in a way that each $J_i$ appears twice. By defining the family of (labelled) pair-partitions of the set $\{1,\dots,k\}$ into $\frac{k}{2}$ labelled  subsets with $2$ elements each; \[S_k\defeq\Set{\pi:\{1,\dots,k\}\to\{1,\dots,k/2\} | \abs{\pi^{-1}(\{j\})}=2\text{ for all }1\leq j\leq k/2},\] the sum then reads
\begin{align}\label{eq:mkn} m_{k,n}\approx 3^{-kp_n/2}\binom{n}{p_n}^{-k/2} \sum_{\pi\in S_k} \sum_{\{J_1,\dots,J_{k/2}\}\subset I_n}^\star 2^{-n} \Tr \sigma_{J_{\pi(1)}}\dots \sigma_{J_{\pi(k)}},
\end{align}
 where $\sum^\star$ indicates that the elements $J_1,\dots,J_{k/2}$ are distinct.

At this point it is useful to introduce the notion of non-crossing 
pair-partitions which often appear in random matrix theory. An
element $\pi\in S_k$ shall be called \emph{crossing} if there exists $1\leq a<b<c<d\leq k$ such that $\pi(a)=\pi(c)$ and $\pi(b)=\pi(d)$, otherwise
it is called \emph{non-crossing}; the corresponding subsets of $S_k$ are
denoted by $S_k^{(c)}$ and $S_k^{(nc)}$. These notions emerge in this context since by Lemma \ref{lem:partitionsPauliMatrices} (a proof of which is given in the appendix) for a non-crossing $\pi\in S_k^{(nc)}$ the matrices $\sigma_{J_{\pi(j)}}$ in the trace in~\eqref{eq:mkn} can be reordered such that all appear as squares and therefore the normalised traces are all $1$ independent of the $J_l$'s. 

\begin{lemma}[Product of Pauli matrices ordered in pair-partitions]\label{lem:partitionsPauliMatrices} Let $k$ be even, $\pi\in S_k$ and define \[I_n\defeq \Set{({\bf a},e) | e\in \Gamma_n, {\bf a}\in \{1,2,3\}^{\abs{e}}}\] for some hypergraph $\Gamma_n$.
\begin{enumerate}[(i)]
\item If $\pi$ is non-crossing, then $\frac{1}{2}\Tr \sigma^{(a_{\pi(1)})}\dots \sigma^{(a_{\pi(k)})}=1$ for all $1\leq a_1,\dots,a_{k/2}\leq 3$.
\item If $\pi$ is crossing, there exist $1\leq a_1,\dots a_{k/2}\leq 3$ such that $\frac{1}{2}\Tr \sigma^{(a_{\pi(1)})}\dots \sigma^{(a_{\pi(k)})}\ne 1$.
\item If $\pi$ is non-crossing, then $2^{-n}\Tr \sigma_{J_{\pi(1)}}\dots\sigma_{J_{\pi(k)}}=1$ for all $J_1,\dots,J_{k/2}\in I_n$.
\end{enumerate}
\end{lemma}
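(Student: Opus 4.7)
The plan is to treat (iii) as the main structural statement, observe that (i) is essentially the $n=1$ specialisation, and handle (ii) separately by a careful induction supplying explicit $a_i$'s.

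For (iii), I would proceed by induction on $k$, exploiting the well-known combinatorial fact that every non-crossing pair-partition $\pi\in S_k^{(nc)}$ with $k\geq 2$ admits at least one pair of consecutive indices, i.e.\ an index $l$ with $\pi(l)=\pi(l+1)$. At such a position, the product in the trace contains the factor $\sigma_{J_{\pi(l)}}\sigma_{J_{\pi(l+1)}}=\sigma_{J_{\pi(l)}}^{2}$, and since $\sigma_{J_{\pi(l)}}$ is a tensor product of Pauli matrices (and $1_2$'s), its square is $1_{2^n}$. Thus this pair can be removed, reducing to a product of $k-2$ factors indexed by the pair-partition $\pi'\in S_{k-2}^{(nc)}$ obtained from $\pi$ by deleting the two positions and relabelling. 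Since removing an adjacent pair preserves non-crossing-ness, the inductive hypothesis applies and gives $2^{-n}\Tr\sigma_{J_{\pi(1)}}\cdots\sigma_{J_{\pi(k)}}=1$. Statement (i) is the same reduction restricted to a single qubit, where $\sigma^{(a)}\sigma^{(a)}=1_2$ for every $a\in\{1,2,3\}$; one could also cite (iii) with $n=1$ directly.

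For (ii), I would induct on $k$ with base case $k=4$. Up to relabelling, the unique crossing partition is $\pi(1)=\pi(3)=1$, $\pi(2)=\pi(4)=2$; choosing $a_1=1$, $a_2=2$ and using $\sigma^{(1)}\sigma^{(2)}=-\sigma^{(2)}\sigma^{(1)}$ yields $\tfrac{1}{2}\Tr(\sigma^{(1)}\sigma^{(2)}\sigma^{(1)}\sigma^{(2)})=-1\ne 1$. For the inductive step with $\pi\in S_k^{(c)}$, I would distinguish two cases. If $\pi$ has an adjacent matched pair $\{l,l+1\}$, pick any value $a_{\pi(l)}\in\{1,2,3\}$ and reduce as in (iii); the reduced partition on $k-2$ positions inherits every crossing of $\pi$ (an adjacent pair crosses nothing), hence is still crossing, and induction supplies the remaining $a_i$'s. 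If no adjacent matched pair exists, I locate a minimal crossing quadruple $a<b<c<d$ with $\pi(a)=\pi(c)$, $\pi(b)=\pi(d)$, and use the recursion of Lemma~\ref{lemma:pauliMatrices}(iv) to expand the trace: choosing $a_{\pi(a)}$ and $a_{\pi(b)}$ as two distinct Pauli indices ensures that the anticommutation sign from the crossing is preserved, while the remaining indices can be fixed so that all other contributions pair up non-crossingly (each contributing $+1$), leaving an overall $-1$.

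The main obstacle is the second case of (ii): producing an explicit choice of $a_i$'s when the crossing is not localised near an adjacent pair. The recursion from Lemma~\ref{lemma:pauliMatrices}(iv) provides a handle, but tracking signs through multiple applications requires care, and one must verify that the auxiliary indices (those not belonging to the distinguished crossing quadruple) can be assigned values in $\{1,2,3\}$ so as to collapse the reduced sub-partitions to non-crossing ones without introducing cancellations. The cleanest resolution, which I expect to carry out, is to first eliminate every adjacent pair of $\pi$ by the reduction of case one until the residual partition has no adjacent matches, then observe that such a residual partition always contains the $k=4$ crossing pattern as the innermost block, and finally apply the base case inside this block with the outer assignments chosen to square to the identity.
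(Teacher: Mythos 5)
Your treatment of (i) and (iii) is correct: reducing along an adjacent matched pair and using $\sigma_{J}^{2}=1_{2^n}$ is a clean alternative to the paper's route (which instead moves one occurrence of a block next to its partner, using that non-crossing-ness places an even number of letters between the two occurrences, and obtains (iii) by factorising the trace over qubits and applying (i) componentwise). The adjacent-pair removal step in your proof of (ii) is also fine, since an adjacent block participates in no crossing and its removal preserves crossingness.

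The gap is exactly where you flagged it, the second case of (ii), and your proposed resolution does not close it. After stripping adjacent pairs it is \emph{not} true that the residual crossing partition ``contains the $k=4$ crossing pattern as the innermost block'': the partition with pattern $(1,2,3,1,2,3)$ has no adjacent matched pair, yet no four consecutive positions form the pattern $rsrs$, so there is no inner $k=4$ block to which the base case can be applied with the outer letters simply squaring to the identity. More generally, the remaining blocks may interleave with the distinguished crossing pair, and since only the Pauli labels (not the positions) are at your disposal, you cannot in general make those blocks ``pair up non-crossingly'': gathering the two occurrences of a crossing block produces signs depending on the parity of the number of interleaved letters with different labels, and for a fixed assignment the total value can come out $+1$ for some partitions. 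The paper resolves this by setting $a_l=1$ for every block $l\notin\{r,s\}$, collapsing each maximal run of $\sigma^{(1)}$'s, and computing that for $a_r,a_s\in\{2,3\}$ the normalised trace equals $(-1)^{\beta+\delta+1-\delta_{a_r a_s}}$ with $\beta,\delta\in\{0,1\}$ determined by the partition; the value therefore flips sign between the choices $a_r=a_s$ and $a_r\neq a_s$, so it cannot equal $1$ for both choices --- and existence of one choice with value $\neq 1$ is all that (ii) asserts. To repair your argument you should adopt this comparison-of-two-assignments device (or an equivalent explicit sign bookkeeping), rather than trying to exhibit a single assignment yielding $-1$ by reduction to a consecutive $rsrs$ block.
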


For the sum over the non-crossing pair-partitions $S_k^{(nc)}$ we thus find a contribution of
\begin{align*} 3^{-kp_n/2}\binom{n}{p_n}^{-k/2} \sum_{\pi\in S_k^{(nc)}}\sum_{\{J_1,\dots,J_{k/2}\}\subset I_n}^\star 1 
= \frac{\abs{S_k^{(nc)}}\binom{\abs{I_n}}{k/2}}{\binom{n}{p_n}^{k/2} 3^{kp_n/2}}\approx\frac{\abs{S_k^{(nc)}}}{(k/2)!}=\frac{k!}{(k/2)!(k/2+1)!},
\end{align*} 
where it was used that the number of non-crossing pair-partitions into unlabelled subsets are given by the Catalan numbers (see e.g. \cite[Proposition 2.1.11]{anderson2010introduction}). It remains to show that the sum over the crossing  pair-partitions gives no contribution. Since the total number of partitions is finite it suffices to show that 
\begin{equation}\label{eq:sumlim}
\lim_{n\to\infty} \frac{3^{-kp_n/2}}{(k/2)!}\binom{n}{p_n}^{-k/2} \sum_{J_1,\dots,J_{k/2}\in I_n} 2^{-n} \Tr \sigma_{J_{\pi(1)}}\dots \sigma_{J_{\pi(k)}} = 0
\end{equation}
for each crossing $\pi\in S_k$. Notice that this summation is normalised, i.e. the combinatorial prefactor is exactly the number of terms in the sum.

Since $\pi$ is assumed to be crossing there are $1\leq a<b<c<d\leq k$ such that $r\defeq\pi(a)=\pi(c)$ and $s\defeq\pi(b)=\pi(d)$. From Lemma \ref{lem:intersectionAsymptotics} it follows 
that there exists a sequence $q_n \gg 1$ such that the proportion of pairs of subsets of $\{1,\dots,n\}$, with $p_n$ elements each, that share at least $q_n$ elements approaches $1$ as $n\to\infty$. Applied to our normalised sum in \eqref{eq:sumlim}, this means that we can restrict our attention to those terms for which $e_r$ and $e_s$ have at least $q_n$ vertices in common. In this way we arrive at 
\begin{align} &\frac{3^{-kp_n/2}}{(k/2)!}\binom{n}{p_n}^{-k/2} \sum_{({\bf a}_1,e_1),\dots,({\bf a}_{k/2},e_{k/2})\in I_n} 2^{-n} \Tr \sigma_{({\bf a}_{\pi(1)},e_{\pi(1)})}\dots \sigma_{({\bf a}_{\pi(k)},e_{\pi(k)})}\nonumber\\
\approx&\frac{3^{-kp_n/2}}{(k/2)!}\binom{n}{p_n}^{-k/2} \sum_{\substack{e_1,\dots,e_{k/2}\in\Gamma_n\\ \abs{e_r\cap e_s}\geq q_n}} \sum_{{\bf a}_1,\dots,{\bf a}_{k/2}\in \{1,2,3\}^{p_n}}2^{-n}\Tr \sigma_{({\bf a}_{\pi(1)},e_{\pi(1)})}\dots \sigma_{({\bf a}_{\pi(k)},e_{\pi(k)})} \label{eq:crossingPartitionsPn}\end{align} for some $q_n\gg 1$ only depending on $p_n$. 

For a hyperedge $e$, the hyperedge consisting of the first $l$ vertices of $e$ (with respect to the natural ordering) will be denoted by by $e_{1:l}$. 
We introduce the shorthand notation \[ g\defeq(e_r\cap e_s)_{1:q_n},\] (we recall that hyperedges are subsets of the vertex set,
thus set theoretical operations, such as
 $\in,\cup,\cap,\setminus$, are meaningful for them). 
With this notation
 we can factorise the inner sum from eq.~\eqref{eq:crossingPartitionsPn} to get 
\begin{align*}
\sum_{{\bf a}_1,\dots,{\bf a}_{k/2}\in \{1,2,3\}^{p_n}}&2^{-n}\Tr \sigma_{({\bf a}_{\pi(1)},e_{\pi(1)})}\dots \sigma_{({\bf a}_{\pi(k)},e_{\pi(k)})}=\\
&\left(\sum_{{\bf a}_1\in\{1,2,3\}^{\abs{e_1\cap g}}}\dots \sum_{{\bf a}_{k/2}\in\{1,2,3\}^{\abs{e_{k/2}\cap g}}}2^{-n}\Tr \sigma_{({\bf a}_{\pi(1)},e_{\pi(1)}\cap g)}\dots \sigma_{({\bf a}_{\pi(k)},e_{\pi(k)}\cap g)}\right)\\ 
& \times \left(\sum_{{\bf a}_1\in\{1,2,3\}^{\abs{e_1\setminus g}}}\dots \sum_{{\bf a}_{k/2}\in\{1,2,3\}^{\abs{e_{k/2}\setminus g}}}2^{-n}\Tr \sigma_{({\bf a}_{\pi(1)},e_{\pi(1)}\setminus g)}\dots \sigma_{({\bf a}_{\pi(k)},e_{\pi(k)}\setminus g)}\right)
\end{align*} 
where the second factor is bounded by $3^{\sum_{l=1}^{k/2}\abs{e_l\setminus g}}$. We then further factorise the first factor to obtain
\[\prod_{j\in g} \sum_{\substack{a_l\in\{1,2,3\}\text{ if } j\in e_l\cap g\\ a_l=0 \text{ else}}}\frac{1}{2}\Tr \sigma^{(a_{\pi(1)})}\dots\sigma^{(a_{\pi(k)})}.\] 
For any fixed $j\in g$, the number
 \[
m_j\defeq\abs{\Set{1\leq l\leq k/2 | j\in e_l\cap g}}
\] of $l$'s such that $e_l\cap g$ contains the vertex $j$, is always between $2\leq m_j\leq k/2$ since at least $r$ and $s$ satisfy this condition. By ignoring the $a_l=0$ factors and writing $m$ for $m_j$, we see that we can rewrite \[\sum_{\substack{a_l\in\{1,2,3\}\text{ if } j\in e_l\cap g\\ a_l=0 \text{ else}}}\frac{1}{2}\Tr \sigma^{(a_{\pi(1)})}\dots\sigma^{(a_{\pi(k)})}=\sum_{a_1,\dots,a_m=1}^3 \frac{1}{2}\Tr \sigma^{(a_{\tilde{\pi}(1)})}\dots\sigma^{(a_{\tilde{\pi}(2m)})}\] for some crossing $\tilde{\pi}\in S_{2m}$. According to part (ii) of Lemma \ref{lem:partitionsPauliMatrices}, some (but not all) terms in this sum are equal to $-1$ and therefore there exists a (possibly) $\tilde{\pi}$ and $m$-dependent constant $C(\tilde\pi,m)<1$ such that \[\left\lvert \sum_{a_1,\dots,a_m=1}^3 \frac{1}{2}\Tr \sigma^{(a_{\tilde{\pi}(1)})}\dots\sigma^{(a_{\tilde{\pi}(2m)})}\right\lvert\leq C(\tilde\pi,m) \cdot 3^m.\] By setting the \[C\defeq \max_{2\leq m\leq k/2}\max_{\tilde\pi\in S_{2m}^{(c)}} C(\tilde\pi,m)<1\] to be the maximum of those constants, and recalling that $|g|=q_n$, we arrive at
\[\Big\lvert\prod_{j\in g} \sum_{\substack{a_l\in\{1,2,3\}\text{ if } j\in e_l\cap g\\ a_l=0 \text{ else}}}\frac{1}{2}\Tr \sigma^{(a_{\pi(1)})}\dots\sigma^{(a_{\pi(k)})}\Big\lvert\leq \prod_{j\in g} \Big[ C\cdot 3^{\abs{\Set{1\leq l\leq k/2 | j\in e_l\cap g}}}\Big]=C^{q_n}\cdot 3^{\sum_{l=1}^{k/2}\abs{e_l\cap g}}. \] After plugging in our estimates into eq.~\eqref{eq:crossingPartitionsPn} we finally find 
\begin{align}
\Big\lvert \frac{3^{-kp_n/2}}{(k/2)!}&\binom{n}{p_n}^{-k/2} \sum_{({\bf a}_1,e_1),\dots,({\bf a}_{k/2},e_{k/2})\in I_n} 2^{-n} \Tr \sigma_{({\bf a}_{\pi(1)},e_{\pi(1)})}\dots \sigma_{({\bf a}_{\pi(k)},e_{\pi(k)})}\Big\lvert\nonumber\\
&\leq \frac{3^{-kp_n/2}}{(k/2)!}\binom{n}{p_n}^{-k/2} \sum_{\substack{e_1,\dots,e_{k/2}\in\Gamma_n\\ \abs{e_r\cap e_s}\geq q_n}} C^{q_n} 3^{\sum_{l=1}^{k/2}(\abs{e_l\cap g}+\abs{e_l\setminus g})} +\o{1}\nonumber\\
&= \frac{3^{-kp_n/2}}{(k/2)!}\binom{n}{p_n}^{-k/2} \sum_{e_1,\dots,e_{k/2}\in\Gamma_n} C^{q_n} 3^{k p_n/2}+\o{1}=\frac{C^{q_n}}{(k/2)!} +\o{1}=\o{1}\label{eq:crossingPartBound}
\end{align} as $n\to\infty$, proving that the contribution of any crossing partition vanishes.

We have now proved that the $k$-th moment of the limiting distribution is given by $\frac{k!}{(k/2)!(k/2+1)!}$ for even $k$ and $0$ for odd $k$. A direct computation shows that these are the moments of the semicircular distribution with density function \[\rho(x)= \frac{1}{2\pi}\sqrt{4-x^2}\chi_{[-2,2]}(x)\] which furthermore satisfy Carleman's continuity condition.

 We now turn to part (iii) of  Theorem~\ref{thm:pnSPINglass}, i.e. the case where $\lim_{n\to\infty}\frac{p_n}{\sqrt{n}}=\lambda\in(0,\infty)$. By Lemma \ref{lemma:AnkSUM} the odd moments vanish also in this case. For even $k$ an explicit formula for the $k$-th moment can be derived as follows. For a given partition $\pi\in S_k$ we define the \emph{number of crossings} $\kappa(\pi)$ to be the number of subsets $\{r,s\}\subset \{1,\dots,k/2\}$ such that for some $1\leq a<b<c<d\leq k$ we have that $\pi(a)=\pi(c)=r$ and $\pi(b)=\pi(d)=s$. We claim that 
\begin{align} \lim_{n\to\infty}\frac{3^{-kp_n/2}}{(k/2)!}\binom{n}{p_n}^{-k/2} \sum_{({\bf a}_1,e_1),\dots,({\bf a}_{k/2},e_{k/2})\in I_n} 2^{-n} \Tr \sigma_{({\bf a}_{\pi(1)},e_{\pi(1)})}\dots \sigma_{({\bf a}_{\pi(k)},e_{\pi(k)})} = \frac{(e^{-4\lambda/3})^{\kappa(\pi)}}{(k/2)!} \label{eq:kappaPiTrace}\end{align} holds for all partitions $\pi$. If $\{r_1,s_2\},\dots,\{r_{\kappa(\pi)},s_{\kappa(\pi)}\}$ are the crossings of $\pi$, by Lemma \ref{lem:intersectionAsymptotics} the numbers of vertices in the intersections $e_{r_1}\cap e_{s_1},\dots,e_{r_{\kappa(\pi)}}\cap e_{s_{\kappa(\pi)}}$ are approximately independently Poisson-$\lambda$ distributed. It furthermore follows from Lemma \ref{lem:intersectionAsymptotics} that in the limit we can restrict our attention to those edges where the sets $e_{r_1}\cap e_{s_1},\dots,e_{r_{\kappa(\pi)}}\cap e_{s_{\kappa(\pi)}}$ are mutually disjoint. Since the normalised trace of the Hamiltonian acting on a qubit within such a twofold crossing is given by \[3^{-2}\sum_{a,b=1}^3 \frac{1}{2}\Tr\sigma^{(a)}\sigma^{(b)}\sigma^{(a)}\sigma^{(b)} =-\frac{1}{3} \] whereas the normalised trace is $1$ for those qubits not involved in any crossings we find that the lhs. of eq.~\eqref{eq:kappaPiTrace} can be asymptotically rewritten as 
\[\frac{1}{(k/2)!}\sum_{m_1=0}^\infty\dots\sum_{m_{\kappa(\pi)}=0}^\infty \frac{\lambda^{m_1+\dots+m_{\kappa(\pi)}}}{m_1!\dots m_{\kappa(\pi)}!}e^{-\kappa(\pi)\lambda} (-1/3)^{m_1+\dots+m_{\kappa(\pi)}}=\frac{(e^{-4\lambda/3})^{\kappa(\pi)}}{(k/2)!},\] just as claimed. The $k$-th limiting moment, i.e. the normalised trace of $H_n^{(p_n-\text{glass})}$ in the limit $n\to\infty$, is thus given by 
\[m_k(\lambda)\defeq \frac{1}{(k/2)!}\sum_{\pi\in S_k} (e^{-4\lambda/3})^{\kappa(\pi)}=\sum_{\pi\in \tilde{S}_k} (e^{-4\lambda/3})^{\kappa(\pi)},\] where $\tilde{S}_k$ denotes the set of unlabelled partitions. 

These moments uniquely correspond to the distribution given in eq. \eqref{eq:rhoLambda}, as known from the theory of the q-Hermite polynomials, see \cite[eqs. (3.2) and (3.8)]{Ismail1987379}. For the convenience of the reader we collect some further properties of this distribution in Proposition \ref{prop:rhoLambda}.
\end{proof}

\begin{remark}\label{rem:spins}
The proof of the Theorem \ref{thm:pnSPINglass} also works for general spin-$s$ systems (instead of spin-$1/2$) with small changes.  Mainly, part (ii) from Lemma \ref{lem:partitionsPauliMatrices} has to be replaced by a corresponding Lemma for spin-$s$ which can be proved along the lines of the original proof. This replacement (possibly) changes the value of the $C$-constant from eq.~\eqref{eq:crossingPartBound} which is irrelevant for the result since $C<1$ is sufficient for the convergence against zero. In part (iii) the proof also applies to general spin-$s$ systems, except that $e^{-4\lambda/3}$ has to be replaced by $e^{-4s\lambda/(2s+1)}$. Theorems \ref{thm:convergenceForGraphs} and \ref{thm:convergenceForHypergraphs} also carry over to spin-$s$ since for the important bounds only degree properties of the graph and no specifics of the spin-$1/2$ system were used.
\end{remark}

\begin{prop}\label{prop:rhoLambda}
Suppose that $\lim_{n\to\infty}\frac{p_n}{\sqrt{n}}=\lambda\in(0,\infty)$, then 
$m_{k,n}(\lambda)$, the normalised trace of the $k$-th power of $H_n^{(p_n)-\text{glass}}$, in the limit $n\to\infty$ takes the form 
\[
m_k(\lambda)\defeq\lim_{n\to\infty}m_{k,n}(\lambda)=0\] if $k$ is odd and
\begin{align}m_k(\lambda)&\defeq \lim_{n\to\infty} m_{k,n}(\lambda)=
\frac{1}{(1-e^{-4\lambda/3})^{k/2}}\sum_{j=-k/2}^{k/2}(-1)^j e^{-2\lambda\cdot j(j-1)/3}\binom{k}{k/2+j}\nonumber\\ &= \frac{1}{\sqrt{2\pi}} \int_{-\infty}^\infty e^{-x^2/2}\left(\frac{2\sinh^2(i x\sqrt{\lambda/3}+\lambda/3)}{e^{-2\lambda/3}\sinh(-2\lambda/3)}\right)^{k/2}\diff x\label{eq:mkLambda}\end{align}
if $k$ is even. For any fixed even $k$ it furthermore holds that $m_k(\lambda)$ is monotonically decreasing in $\lambda$ and satisfies 
\begin{equation}\label{eq:limrel}
\lim_{\lambda\to 0}m_k(\lambda)=(k-1)!! \quad\text{and}\quad \lim_{\lambda\to \infty}m_k(\lambda)=\frac{k!}{(k/2)!(k/2+1)!} 
\end{equation}
in agreement with the statements of Theorem \ref{thm:pnSPINglass}. The corresponding limiting probability distribution $\mu_\lambda$ has the compactly supported density function given in eq.~\eqref{eq:rhoLambda} which converges pointwise to the semicircular density function when $\lambda\to\infty$ and to the density function of the normal distribution when $\lambda\to0$. Furthermore for any fixed $\lambda$ the density function $\rho_\lambda$ has square root singularities in $\pm2/\sqrt{1-e^{-4\lambda/3}}$.
\end{prop}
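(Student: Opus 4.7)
The proof would start from the formula $m_k(\lambda)=\sum_{\pi\in\tilde S_k}q^{\kappa(\pi)}$ with $q=e^{-4\lambda/3}$ that was already established at the end of the proof of Theorem~\ref{thm:pnSPINglass}(iii). The first identity in~\eqref{eq:mkLambda} is the classical Touchard--Riordan formula for the generating polynomial of pair partitions weighted by the number of crossings. It can be proved by induction on $k$: one derives a three-term recursion for $m_k(q)$ by conditioning on the partner of the element $1\in\{1,\dots,k\}$ (which contributes a factor $q^{j}$ when there are $j$ elements between $1$ and its partner that are crossed), and then verifies that the right-hand side of~\eqref{eq:mkLambda} satisfies the same recursion with matching initial data $m_0=m_2=1$. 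The integral representation in~\eqref{eq:mkLambda} is the standard moment formula for the continuous $q$-Hermite weight measure and follows from the theta-function identity exploited by Ismail and Masson~\cite{Ismail1987379}.

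Monotonicity of $m_k(\lambda)$ for fixed even $k$ is immediate from the sum representation: every term $q^{\kappa(\pi)}$ is either constant in $\lambda$ (when $\kappa(\pi)=0$) or strictly decreasing in $\lambda$. The limit relations~\eqref{eq:limrel} are also read off directly from this sum: as $\lambda\to 0$ one has $q\to 1$ so every summand tends to $1$, giving the total number $|\tilde S_k|=(k-1)!!$ of pair partitions; as $\lambda\to\infty$ one has $q\to 0$ so only non-crossing pair partitions contribute, giving the Catalan number $|\tilde S_k^{(nc)}|=k!/((k/2)!(k/2+1)!)$.

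The density function~\eqref{eq:rhoLambda} is exactly the orthogonality measure $v(x|q)$ of the continuous $q$-Hermite polynomials on $[-2/\sqrt{1-q},2/\sqrt{1-q}]$, whose moments coincide with $m_k(q)$ (see~\cite{Ismail1987379}). Since $|m_k(\lambda)|\le(k-1)!!$ by the monotonicity argument, Carleman's condition holds and $\mu_\lambda$ is uniquely determined by its moments. The square-root singularities at the endpoints $\pm 2/\sqrt{1-q}$ come from comparing the factor $\sqrt{1-(1-q)x^2/4}$ in the denominator of the prefactor (which vanishes like a square root at the endpoints) with the $k=0$ factor $1-(1-q)x^2/4$ in the infinite product (which vanishes linearly); the net behaviour is a square-root zero of the density, just like for the semicircle law.

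The pointwise convergence of $\rho_\lambda$ to the semicircle density as $\lambda\to\infty$ is a straightforward calculation: in the limit $q\to 0$ the ratio $(1-q^{2k+2})/(1-q^{2k+1})$ tends to $1$, all factors with $k\ge 1$ in the infinite product tend to $1$, and the $k=0$ factor together with the prefactor produces $\sqrt{1-x^2/4}/\pi=\sqrt{4-x^2}/(2\pi)$. The $\lambda\to 0$ limit is the delicate part: both the normalising factor $\sqrt{1-q}$ and the support scale $1/\sqrt{1-q}$ degenerate as $q\to 1$, and the convergence of $v(x|q)$ to the standard Gaussian density requires the standard asymptotic analysis of $q$-Pochhammer products as $q\uparrow 1$, which we would handle via logarithms and Euler--Maclaurin. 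This classical $q\to 1$ limit for continuous $q$-Hermite measures is the only genuinely analytic step in the proof; all the other assertions reduce to the combinatorial identity $m_k(\lambda)=\sum_\pi q^{\kappa(\pi)}$ and elementary manipulations.
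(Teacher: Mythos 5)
Your proposal follows essentially the same route as the paper: starting from $m_k(\lambda)=\sum_{\pi\in\tilde{S}_k}q^{\kappa(\pi)}$ with $q=e^{-4\lambda/3}$, you identify the closed form and the integral representation in \eqref{eq:mkLambda} with the Touchard--Riordan formula and the continuous $q$-Hermite moment formula, recognise $v(x|q)$ as the $q$-Hermite orthogonality measure, and read off the limits, the monotonicity and the edge behaviour; this is exactly what the paper does, by citing \cite{formalPowerSeries} and \cite{Ismail1987379}. Two remarks. Your termwise monotonicity argument (each $q^{\kappa(\pi)}$ is nonincreasing in $\lambda$, strictly decreasing when $\kappa(\pi)\ge1$) is cleaner than the paper's appeal to computing the derivative, and your account of the square-root behaviour at $\pm2/\sqrt{1-q}$ — the $1/\sqrt{1-(1-q)x^2/4}$ prefactor against the linearly vanishing $k=0$ factor of the product — makes precise what the paper only asserts in one line. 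On the other hand, the inductive proof you sketch for the Touchard--Riordan identity would not go through as described: if $1$ is matched to position $i$, the crossings contributed by that pair are the chords straddling the arc, a quantity that depends on how the remaining elements are matched and is not a fixed power $q^{j}$ determined by the gap, so the generating function does not peel off into a clean three-term recursion; the classical derivations instead use continued fractions and weighted Motzkin paths, or Penaud's bijection. Since the identity is classical and citable — which is how the paper handles it, and how you in effect handle the delicate $q\to1$ Gaussian limit by deferring to \cite{Ismail1987379} — this does not affect the substance of your argument.
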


\begin{proof}
 Recall from the proof of Theorem \ref{thm:pnSPINglass} that the moments are given by \[m_k(\lambda)=\sum_{\pi\in \tilde{S}_k} (e^{-4\lambda/3})^{\kappa(\pi)}\] for even $k$ and $m_k(\lambda)=0$ for odd $k$. According to an exact formula by Touchard and Riordan and its integral representation (see \cite[eqs. (5) and (7) on page 197]{formalPowerSeries}) the even moments are given by the formulas in eq.~\eqref{eq:mkLambda}. The monotone decrease follows by computing the derivative in $\lambda$ and the claimed limits in \eqref{eq:limrel} are also direct computations. The claimed limiting behaviour of the density function as $\lambda\to 0$ and $\lambda\to\infty$ also follows from the discussion in Section 2 of \cite{Ismail1987379}. The $k=0$ term from eq.~\eqref{eq:rhoLambda} is responsible for the square root singularity near the edges $\pm2/\sqrt{1-e^{-4\lambda/3}}$. 
\end{proof}

\appendix
\section*{Appendix: Proofs of some technical lemmas}
\renewcommand{\thesection}{A} 

\begin{proof}[Proof of Lemma \ref{lemma:pauliMatrices}]~
\begin{enumerate}[(i)]
\item[(i),(ii)] Trivial calculations. \setcounter{enumi}{2}
\item Two Pauli matrices $\sigma^{(a)},\sigma^{(b)}$ anti-commute if $a\not=b$.  We can therefore up to a factor of $\pm 1$ reorder the arguments of $\sigma(a_1,\dots,a_k)$ in such a way that $a_1=\dots=a_{n_1}=0$, $a_{n_1+1}=\dots=a_{n_1+n_2}=1$, $a_{n_1+n_2+1}=\dots=a_{n_1+n_2+n_3}=2$, $a_{n_1+n_2+n_3+1}=\dots=a_{k}=3$ where $n_0+n_1+n_2+n_3=k$ denote the numbers of $0$'s, $1$'s, $2$'s and $3$'s. Using that the square of any Pauli matrix is the identity we then find \[\sigma(a_1,\dots,a_k)=\pm\sigma(\pi_{n_1},2\pi_{n_2},3\pi_{n_3})\] where $\pi_n$ is the parity function i.e. $\pi_n=0$ if $n$ is even and $\pi_n=1$ if $n$ is odd.
\item Using the anti-commutation relation $\sigma^{(a)}\sigma^{(b)}=2\delta_{ab}1_2-\sigma^{(a)}\sigma^{(b)}$ we compute inductively \begin{align*}
\sigma(a_1,a_2,\dots,a_k)&=2\delta_{a_1,a_2}\sigma(a_3,\dots,a_k)-\sigma(a_2,a_1,a_3,\dots,a_k)\\
&=2\delta_{a_1a_2}\sigma(a_3,\dots,a_k)-2\delta_{a_1a_3}\sigma(a_2,a_4,\dots,a_k)+\sigma(a_2,a_3,a_1,a_4,\dots,a_k)\\
&=\dots=2\sum_{j=2}^k (-1)^j\delta_{a_1a_j} \sigma(a_2,\dots,\widehat{a_j},\dots,a_k)-\sigma(a_2,\dots,a_k,a_1) \end{align*} from which the claim follows by the cyclicity of the trace. \qedhere
\end{enumerate}
\end{proof}

\begin{proof}[Proof of Proposition \ref{prop:StarGraph}]
We again start to compute the moments using the short hand notation $\sigma_{J_i}\defeq\sigma_{1}^{(a_i)}\sigma_{j_i}^{(b_i)}$ for $J=(a_i,b_i,j_i)$  and find by Lemma \ref{lemma:AnkSUM} that \begin{align*}
m_{k,n}\approx(9(n-1))^{-k/2}\sum_{\{J_1,\dots,J_{k/2}\}\subset I_n}^\star\sum_{\pi\in S_k} 2^{-n} \Tr\sigma_{J_{\pi(1)}}\dots\sigma_{J_{\pi(k)}}.
\end{align*}
In the limit $n\to\infty$ the part of the sum where two different $J_i$ have the same $j_i$-\emph{coordinate} can be neglected and we find \begin{align*}
m_{k,n}\approx (9(n-1))^{-k/2} \sum_{\{j_1,\dots,j_{k/2}\}\subset\{2,\dots,n\}} \sum_{a_1,\dots,a_{k/2}=1}^3\sum_{b_1,\dots,b_{k/2}=1}^3\sum_{\pi\in S_k} 2^{-1} \Tr\sigma^{(a_{\pi(1)})}\dots\sigma^{(a_{\pi(k)})}
\end{align*} where it was used that in all but the first component the matrices commute (since the $j_i$ are mutually distinct), the square of any Pauli matrix is the identity and that the trace of the tensor product is the product of the traces. After performing the sums over the $j_i$'s and $b_i$'s (and using that $\binom{n-1}{k/2}(n-1)^{-k/2}\approx \frac{1}{(k/2)!}$) we arrive at \[ m_{k,n}\approx m_k\defeq \frac{3^{-k/2}}{(k/2)!}\sum_{a_1,\dots,a_{k/2}=1}^3\sum_{\pi\in S_k}\frac{1}{2} \Tr\sigma^{(a_{\pi(1)})}\dots\sigma^{(a_{\pi(k)})}.\] We claim that \[f(k)\defeq \sum_{a_1,\dots,a_{k/2}=1}^3\sum_{\pi\in S_k} \frac{1}{2}\Tr\sigma^{(a_{\pi(1)})}\dots\sigma^{(a_{\pi(k)})}=\frac{(k+1)!}{2^{k/2}} \] holds for all even $k$. While $k=2$ is trivial, for the induction step we compute using Lemma \ref{lemma:pauliMatrices}(iv) and the notation therein
\begin{align*}
f(k)&=\sum_{a_1,\dots,a_{k/2}=1}^3\sum_{\pi\in S_k} \frac{1}{2}\Tr\sigma^{(a_{\pi(1)})}\dots\sigma^{(a_{\pi(k)})}\\ &=\sum_{j=2}^k (-1)^j\sum_{a_1,\dots,a_{k/2}=1}^3\sum_{\pi\in S_k}\delta_{a_{\pi(1)}a_{\pi(j)}} \sigma(a_{\pi(2)},\dots,\widehat{a_{\pi(j)}},\dots,a_{\pi(k)})
\end{align*} and then split the sum into to parts $f_1(k)$ and $f_2(k)$ where in $f_1(k)$ we only consider those $\pi\in S_k$ for which $\pi(1)=\pi(j)$ and in $f_2(k)$ those $\pi$ for which $\pi(1)\not=\pi(j)$. For $f_1(k)$ we can then compute 
\begin{align*}
f_1(k)&=3\sum_{j=2}^k (-1)^j\sum_{a_1,\dots,\widehat{a_{\pi(1)}},\dots,a_{k/2}=1}^3\sum_{\substack{\pi\in S_k\\ \pi(1)=\pi(j)}} \sigma(a_{\pi(2)},\dots,\widehat{a_{\pi(j)}},\dots,a_{\pi(k)})\\&
= 3\frac{k}{2}\sum_{j=2}^n (-1)^j \sum_{a_1,\dots,a_{(k-2)/2}=1}^3\sum_{\pi\in S_{k-2}} \sigma(a_{\pi(1)},\dots,a_{\pi(k-2)}) = 3\frac{k}{2}\sum_{j=2}^k (-1)^j f(k-2)=3\frac{k}{2}f(k-2)
\end{align*}where in the first step we performed the sum over $a_{\pi(1)}=a_{\pi(j)}$ and in the second step took out a factor of $\frac{k}{2}$ corresponding to the $\frac{k}{2}$ possible values of $\pi(1)=\pi(j)$. Similarly we find that \[f_2(k)=\frac{k(k-2)}{2}f(k-2)\] and by adding the two recursion relations we finally arrive at \[f(k)=f_1(k)+f_2(k)=\frac{k(k+1)}{2}f(k-2)=\frac{k(k+1)}{2}\frac{(k-1)!}{2^{(k-2)/2}}=\frac{(k+1)!}{2^{k/2}}\] proving the claim. Inserting this into the expression we had for $m_k$ then gives $m_k=\frac{(k+1)!}{ 6^{k/2}(k/2)!}$ for even $k$ and $m_k=0$ for odd $k$.

These moments again satisfy Carleman's continuity condition and therefore uniquely correspond to a limiting distribution whose characteristic function $\phi$ is given by \[ \phi(t)=\sum_{k=0}^\infty \frac{(i t)^{2k}}{(2k)!}\frac{(2k+1)!}{6^k k!}=\sum_{k=0}^\infty (-1)^k\frac{(2k+1)t^{2k}}{6^k k!}=\left(1-\frac{t^2}{3}\right)e^{-t^2/6}\] from which by a Fourier transform we find the density \[\rho(x)=\frac{1}{2\pi}\int_{-\infty}^\infty \phi(t)e^{-itx}\diff t=3\sqrt{\frac{3}{2\pi}}x^2e^{-3x^2/2}.\qedhere\]
\end{proof}

\begin{proof}[Proof of Lemma \ref{lem:intersectionAsymptotics}]~
\begin{enumerate}[(i)]
\item We can safely assume that eventually $a_n+b_n\leq n$ since the assertion is trivial otherwise and then compute\[\binom{n-a_n}{b_n}/\binom{n}{b_n}=\frac{(n-a_n)(n-a_n-1)\dots (n-a_n-b_n+1)}{n(n-1)\dots (n-b_n+1)}=\prod_{k=0}^{b_n-1}\left(1-\frac{a_n}{n-k}\right)\] where all factors are non-negative. We continue with the obvious bounds \[\left(1-\frac{a_n}{n}\right)^{b_n}\leq \prod_{k=0}^{b_n-1}\left(1-\frac{a_n}{n-k}\right)\leq \left(1-\frac{a_n}{n-b_n+1}\right)^{b_n} \] and after applying a logarithm arrive at \begin{align*}\exp\left(-\frac{a_nb_n}{n}\right)&\approx\exp\left(b_n\log\left(1-\frac{a_n}{n}\right)\right)\leq \binom{n-a_n}{b_n}/\binom{n}{b_n} \\ &\leq \exp\left(b_n\log\left(1-\frac{a_n}{n-b_n+1}\right)\right)\approx \exp\left(-\frac{a_nb_n}{n-b_n+1}\right)\end{align*} from which the claim follows immediately.
\item For any fixed $k\geq 0$ the proportion of sets $B_n$ of size $b_n$ that share exactly $k$ elements with $A_n$ is given by 
\begin{align*} \binom{a_n}{k} \binom{n-a_n}{b_n-k}/\binom{n}{b_n} \end{align*} i.e. the number of elements in the intersection is hypergeometrically distributed with parameters $(n,a_n,b_n)$ and therefore has a mean of $\frac{a_nb_n}{n}$ and a variance of \[\frac{a_nb_n}{n}\frac{n-b_n}{n}\frac{n-a_n}{n-1} \] which shows that for $c_n$ growing slower than $\frac{a_nb_n}{n}$ the proportion of $B_n$'s that share at least $c_n$ elements with $A_n$ converges to $1$. \qedhere
\end{enumerate}
\end{proof}

\begin{proof}[Proof of Lemma \ref{lem:partitionsPauliMatrices}]~
\begin{enumerate}[(i)]
\item Suppose that $\pi$ is non-crossing. Let $i<j$ be those indices for which $\pi(i)=\pi(j)=1$. 
Since the partition is non-crossing in the tuple $(\pi(1),\dots,\pi(k))$ there are either zero or two $l$ indices between $\pi(i)$ and $\pi(j)$ for all $1<l\leq k/2$. 
Recall that $\sigma^{(a_{1})}$ anti-commutes with $\sigma^{(a_l)}$ if $a_l\not=a_1$ and commutes otherwise. 
Hence we can freely permute $\sigma^{(a_{\pi(j)})}$ to the left next to $\sigma^{(a_{\pi(j)})}$ and then the claim follows inductively since $(\sigma^{(a_1)})^2=1_2$ and we therefore proved the claim assuming the result for $k-2$. 
For $k=2$ the assertion is trivially true.
\item Suppose now that $\pi$ is crossing, i.e. there exist $a<b<c<d$ such that $r\defeq\pi(a)=\pi(c)$ and $s\defeq\pi(b)=\pi(d)$. Then by setting $a_l=1$ for $l\not\in\{r,s\}$ the expression simplifies to \[\frac{1}{2}\Tr \sigma^{(a_{\pi(1)})}\dots \sigma^{(a_{\pi(k)})}=\sigma(a_{\pi(1)},\dots,a_{\pi(k)})=\sigma(\alpha,a_r,\beta,a_s,\gamma,a_r,\delta,a_s,\epsilon)\] for some $\alpha,\beta,\gamma,\delta,\epsilon\in\{0,1\}$. Using the anti-commutation relations we then find that for $a_r,a_s\in\{2,3\}$ it holds that \begin{align*}\sigma(a_{\pi(1)},&\dots,a_{\pi(k)})=(-1)^\gamma \sigma(\alpha,a_r,\beta,a_s,a_r,\gamma,\delta,a_s,\epsilon)=(-1)^{\gamma+1-\delta_{a_r,a_s}}\sigma(\alpha,a_r,\beta,a_r,a_s,\gamma,\delta,a_s,\epsilon)\\&=\dots=(-1)^{2\gamma+\beta+\delta+1-\delta_{a_r,a_s}}\sigma(\alpha,a_r,a_r,\beta,\gamma,a_s,a_s,\delta,\epsilon)=(-1)^{\beta+\delta+1-\delta_{a_r,a_s}}\sigma(\alpha,\beta,\gamma,\delta,\epsilon).\end{align*} i.e. the result changes sign depending on whether $a_r=2$ and $a_s=3$ or $a_r=a_s=2$ and in particular cannot be equal to $1$ for all choices of $a_1,\dots,a_{k/2}$.
\item This is an immediate consequence of applying part (i) to all components of the tensor product separately and using that the trace of a tensor product factorises.\qedhere
\end{enumerate}
\end{proof}

\footnotesize
\bibliographystyle{hplain}
\bibliography{paper_arxiv1}
\end{document}